\newcommand{\eemph}[1]{\textbf{\textit{#1}}}
\newcommand{\Z}{\mathbb{Z}}
\newcommand{\R}{\mathbb{R}}
\newcommand{\bmat}[1]{\begin{bmatrix}#1\end{bmatrix}}
\def\one{\mathbf{1}}
\def\tp{\mathsf{T}}
\let\mathopfont=\mathrm
\newcommand{\Null}{\mathop{\mathopfont{null}}}
\newcommand{\rank}{\mathop{\mathopfont{rank}}}
\let\bl\bigl
\let\br\bigr
\newbox\vcbox
\def\vcent#1{\setbox\vcbox\hbox{#1}\raise -0.5\ht\vcbox\hbox{#1}}
\newcommand{\ie}{{\it i.e.}}
\newcommand{\floor}[1]{\lfloor#1\rfloor}
\def\ugn{\lambda}
\newcommand{\bittide}{{bittide\ }}
\def\trec{t_\textup{rec}}
\def\tsend{t_\textup{send}}
\DeclareMathOperator{\receive}{receive}
\DeclareMathOperator{\send}{send}
\def\nodes{\mathcal{V}}
\def\edges{\mathcal{E}}
\def\vext{\mathcal{V}_\text{ext}}
\def\gext{\mathcal{G}_\text{ext}}
\def\eext{\mathcal{E}_\text{ext}}
\def\hateext{\hat{\mathcal{E}}_\text{ext}}
\def\tree{\mathcal{T}}
\def\wu{\omega^\text{u}}
\newcommand{\eg}{{\it e.g.}}
\def\itoj{{i \shortto j}}
\def\jtoi{{j \shortto i}}
\def\ptoq{{p \shortto q}}
\def\jtok{{j \shortto k}}
\DeclareRobustCommand{\shortto}{\mathrel{\mathpalette\short@to\relax}}
\newcommand{\short@to}[2]{\mkern2mu
  \clipbox{{.5\width} 0 0 0}{$\m@th#1\vphantom{+}{\shortrightarrow}$}}
\begin{document}

\title{Logical Synchrony and the \bittide Mechanism}

\author{%
  Sanjay Lall\footnotesymbol{1}
  \and  C\u{a}lin Ca\c{s}caval\footnotesymbol{2}
  \and  Martin Izzard\footnotesymbol{2}
  \and  Tammo Spalink\footnotesymbol{2}}

\note{Preprint}

\maketitle

\makefootnote{1}{S. Lall is with the Department of Electrical
  Engineering at Stanford University, Stanford, CA 94305, USA, and is
  a Visiting Researcher at Google.\par
  \texttt{lall@stanford.edu}\medskip}

\makefootnote{2}{C\u{a}lin Ca\c{s}caval,
   Martin Izzard, and Tammo Spalink are with Google.}

\begin{abstract}
  We introduce logical synchrony, a framework that allows
  distributed computing to be coordinated as tightly as in synchronous
  systems without the distribution of a global clock or any reference
  to universal time. We develop a model of events called a logical
  synchrony network, in which nodes correspond to processors and every
  node has an associated local clock which generates the events. We
  construct a measure of logical latency and develop its properties.
  A further model, called a multiclock network, is then analyzed and
  shown to be a refinement of the logical synchrony network. We
  present the bittide mechanism as an instantiation of multiclock
  networks, and discuss the clock control mechanism that ensures that
  buffers do not overflow or underflow. Finally we give conditions
  under which a logical synchrony network has an equivalent synchronous
  realization.
\end{abstract}

\section{Introduction}

Distributed computation requires processes on networked
machines to coordinate, presenting challenges in maintaining a
consistent notion of time across nodes.  Local clocks require
continual realignment to prevent divergence, while distributing a
global clock is fragile and expensive at scale. When coordination is
focused on correctness, instead of tracking time an option is to track
only causality.  This takes the form of event sequence information,
such as vector clocks, which avoid needing to synchronize clocks but
remain expensive at scale.

In this paper, we introduce \eemph{logical synchrony}, a novel
approach providing a shared notion of time sufficient for reasoning
about causality without requiring a shared system-wide clock. Logical
synchrony scheduling relies solely on knowledge of graph
topology and logical latencies. We present the \bittide mechanism,
which facilitates efficient implementation of logical synchrony on
modern networks, establishing \emph{synchrony} alongside wall-clock
time as a primary abstraction. By ensuring that clocks advance in
lockstep with data frames sent between nodes, \bittide creates a clock
mechanism with reduced state-keeping compared to vector clocks,
enhancing scalability.

Modern networks, including recent versions of Ethernet, 
continually transmit frames regardless of nodes sending 
\emph{actual data} or not, in
order to maintain synchronization of SerDes~\cite{stauffer}
and clock recovery~\cite{ling} circuits.
Leveraging this, the bittide mechanism achieves logical synchrony
by directly tying the clock advancement to the continuous frame transmission
of such networks.
It is this continual transmission that enables 
\bittide synchronization to occur without the overhead of sending 
additional information, a benefit over explicit synchronization 
protocols such as PTP~\cite{ptp,li_sundial_2020}. Applications
  on networks with clocks 
synchronized to wall-clock time must utilize clock error-bounds for 
correctness reasoning~\cite{corbett_spanner_2013}.  The \bittide
system enables cycle-accurate coordination 
without additional clock error-bounds or any associated barriers. 
This is achieved by defining the clock ordering at neighboring nodes
using a graph of frame transmission events.

Logical synchrony is particularly useful for applications
with predictable behavior and resource requirements, including financial
exchanges~\cite{gupta},
databases~\cite{cockroachdb,li_sundial_2020,corbett_spanner_2013},
robotics~\cite{bateni}, and large-scale numerical computations such as
machine-learning training and inference~\cite{cowan}. Such
predictability allows for 
ahead-of-time scheduling across both communications and computation, 
which in turn allows for high efficiency and bounded response times.  An example 
use case is ensuring concurrency control correctness in lock-free database 
transactions by ensuring that all distributed system nodes observe the
same order of events.

Ahead-of-time scheduling, inherent to logical synchrony, is naturally
limited to applications with predictable communication, memory, and compute cycles.  
Traditional dynamic communication stacks and scheduling infrastructure can 
be implemented above \bittide transparently, which allows running applications which
do not have the requisite predictability, 
but applications running on these 
stacks lose the benefits of ahead-of-time scheduling.  Further research may extend 
the utility of logical synchrony to more dynamic and data-dependent situations, 
for example to support probabilistic ahead-of-time scheduling of such applications 
where behavior is evolving slowly enough for a scheduler to adapt and reconfigure.

Logical synchrony and \bittide have nodes track logical time, 
which potentially diverges from wall-clock time. This poses a limitation for 
applications requiring wall-clock time, such as real-time embedded systems or 
control systems. 
Addressing this limitation and failure handling requires
augmentation of the basic \bittide mechanism presented here, and thus are beyond 
the scope of this paper.  A consequence of ahead-of-time scheduling is that failure
handling naturally happens independently of scheduling because there is no
runtime dynamic scheduler.  Node or link failures may necessitate rescheduling
execution or communication, which in turn may require application participation.

\hyphenation{dis-tri-buted}

The bittide mechanism enables processes on distributed network 
cores to behave as if perfectly synchronized despite individual cores 
being only imperfectly synchronized. A logical synchrony network, abstracting 
the \bittide mechanism, characterizes causality relationships between events. 
Logical latencies specify these relationships exactly, a striking property 
that allows precise coordination and reasoning about both system performance 
and event ordering. Using communication events between processes for logical 
coordination originates with the work of Lamport~\cite{lamport} and allows 
precise reasoning about correctness.  Logical synchrony ties these events to 
the repetitive frame transmission events of the network, and thus allows 
precise coordination and reasoning about the performance of the system as well
as the ordering of events, bringing the guarantees available in
synchronous execution to distributed systems without the need for a
global time reference.  Our work extends Lamport's framework 
into the efficiency domain, enabling reasoning about both correctness and 
scheduling.

Synchronous execution models have been used successfully in realtime
systems~\cite{berry, benveniste:12yearsofsynchrony:2003,
roop_towards_2004} to reason about correctness, in particular meeting
deadlines.  Often, synchronous abstractions are decoupled from
implementation and are used to validate system functional behavior.
When mapping synchronous abstractions to asynchronous
non-deterministic hardware, work has been done to automate code
generation that matches the functional semantics, hiding the
non-deterministic behavior of the hardware with explicit
synchronization, for example~\cite{didier_sheep_2019}.  Logical
Execution Time (LET) was introduced by Henzinger and
Kirsch~\cite{henzinger_let:2003} to support the design of reactive,
cyber-physical systems. More recently, Lingua
Franca~\cite{linguafranca,linguafranca:time:lncs:2021} supports
concurrent and distributed programming using time-stamped
messages. Lingua Franca exposes to programmers the notion of {\em
reactors} that are triggered in logical time, allowing deterministic
reasoning about four common design patterns in distributed systems:
alignment, precedence, simultaneity, and consistency. We argue that
the causality reasoning in the logical synchrony framework subsumes
such design patterns -- they are all effectively enabling reasoning
about ordering of events in a system that exchanges messages, and as
we will show in the paper, this is exactly the class of applications
for which logical synchrony determines precisely the causality
relationships.

Alternatively, synchronous execution can be implemented using a
single global clock. For small real-time systems, cyber-physical
systems, and control systems, a global clock can be distributed from a
single oscillator. Scaling such systems is difficult because large clock
distribution networks introduce delays which must be corrected.  

Preceding works
  such as Sundial~\cite{li_sundial_2020} have also showcased the
  difficulty in managing fault tolerance for synchronized real-time
  clocks.
For the majority of systems using wall-clock time as their global
clock, synchronization implies exchanging timestamps~\cite{ptp, ntp}.
Techniques such as TrueTime~\cite{corbett_spanner_2013} and
White Rabbit~\cite{whiterabbit} attempt to reduce the latency
uncertainty, and thus the time-uncertainty bounds, from milliseconds in
TrueTime to sub-nanosecond in White Rabbit.

To achieve desired levels of
performance using existing network protocols requires expensive time
references such as dedicated atomic clocks and networking hardware
enhancements to reduce protocol overhead. Time uncertainty is exposed
to programmers through an uncertainty interval which guarantees that
current time is within interval bounds for all nodes in the system, such
that every node is guaranteed to have passed current time when the bound
elapses.

Logical synchrony, formalized in Section~\ref{sec:logical-synchrony},
abstracts the notion of shared time and allows us to avoid a global
reference clock or wall-clock. Time is defined only by local clocks
decoupled from physical time. The idea is that events at the same node
are ordered by local time, and events at different nodes are ordered
by causality.  As we will show, logical synchrony requires no
system-wide global clock and no explicit synchronization (timestamp
exchanges or similar), which thereby allows for potentially infinitely
scalable systems. Reasoning about ordering of events in logically
synchronous systems follows the partial order semantics of
Lamport~\cite{lamport} and thus provides equivalence with any
synchronous execution that generates identical event graphs.

To establish how logical synchrony can be realized in practice, we
first define what logical synchrony means within an abstract model of
distributed systems with multiple clocks, defining local clocks in a
multiclock network. We show how to combine the FIFO occupancies
with the offsets between neighboring clocks, and how this combination 
is enough to determine the causality relationships.

We then explain how \bittide~\cite{bms,res,reframing} is a mechanism to 
efficiently implement logical synchrony with real hardware and thereby 
bring desirable synchronous execution properties to distributed 
applications efficiently at scale.

\subsection{Mathematical preliminaries and notation}

An \eemph{undirected graph} $\mathcal G$ is pair $(\mathcal V,
\mathcal E)$ where $\mathcal V$ is a set and $\mathcal E$ is a subset
of the set of 2-element subsets of $\mathcal V$.
A \eemph{directed graph} $\mathcal G$ is pair $(\mathcal V, \mathcal
E)$ where $\mathcal E \subset \mathcal V \times \mathcal V$ and $(v,
v)\not\in \mathcal E$ for all $v \in\mathcal V$.  An edge
$e\in\mathcal E$ in a directed graph may be denoted $(u, v)$ or $u \to
v$.  A directed graph may contain a 2-cycle, that is a pair of edges
$u \to v$ and $v \to u$.  An \eemph{oriented graph} is a directed
graph in which there are no 2-cycles.

Suppose $G = (\mathcal V, \mathcal E)$ is a directed graph, and number
the vertices and edges so that $\mathcal V = \{1, \dots,n\}$ and
$\mathcal E = \{ 1, \dots, m\}$.  Then the \eemph{incidence matrix} $B
\in\R^{n\times m}$ is
\[
B_{ij} = \begin{cases}
  1 & \text{if edge $j$ starts at node $i$} \\
  -1 & \text{if edge $j$ ends at node $i$} \\
  0 & \text{otherwise}
\end{cases}
\]
for $i=1,\dots,n$ and $j=1,\dots,m$.

A \eemph{walk} in a directed graph $G$ is a non-empty alternating
sequence $v_0,s_0,v_1,s_1,\dots,s_{k-1},v_k$ in which $v_i \in
\mathcal V$, $s_i \in \mathcal E$, and either $s_i = v_i \to v_{i+1}$
or $s_i = v_{i+1} \to v_i$.  In the former case we say $s_i$ has
\emph{forward} or $+1$ orientation, otherwise we say it has
\emph{backward} or $-1$ orientation.  A \eemph{path} is a walk in
which all vertices are distinct.  A \eemph{cycle} is a walk in which
vertices $v_0,\dots,v_{k-1}$ are distinct, all edges are distinct, and
$v_0 = v_k$.  Walks, paths, and cycles are called \eemph{directed} if
all edges are in the forward orientation.

In a directed
graph $G$, given a walk
\[
W =
(v_0,s_0,v_1,s_1,\dots,s_{k-1},v_k)
\]
the corresponding \eemph{incidence vector} $x\in\R^m$ is such that
$x_i = 1$ if there exists $j$ such that $i = s_j$ and $s_j$ has
forward orientation, and $x_i=-1$ if there exists $j$ such that $i =
s_j$ and $s_j$ has reverse orientation, and $x_i=0$ otherwise. For a
directed graph with 2-cycles, there is an edge $u \to v$ and $v \to
u$, and we assign one of these directions as primary and the other as
secondary. This is simply a choice of sign convention.  From a
directed graph we construct an associated oriented graph by discarding
all secondary edges. From an oriented graph we construct an associated
undirected graph by discarding all orientations.  The concepts of
spanning tree and connectedness when applied to a directed graph
always refer to the associated undirected graph. The following two
results are well-known.

\begin{thm}
  \label{thm:smith}
  Suppose $\mathcal G = (\mathcal V, \mathcal E)$ is a directed graph
  with incidence matrix $B$, and suppose edges $1,\dots,n-1$ form a
  spanning tree.
  Partition $B$ according to
  \[
  B = \bmat{B_{11}  & B_{12} \\ -\one^\tp B_{11} & -\one^\tp B_{12}}
  \]
  then $B_{11}$ is unimodular. Further
  \[
  B =
  \bmat{ B_{11} & 0 \\ -\one^\tp B_{11} & 1}
  \bmat{I & 0 \\ 0 & 0}
  \bmat{I & N \\ 0 & I}
  \]
  where $N = B_{11}^{-1}B_{12}$.
\end{thm}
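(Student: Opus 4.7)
The plan is to establish three things in order: (i) the $-\one^\tp$ structure across the bottom row of $B$, which validates the displayed partition; (ii) the unimodularity of $B_{11}$; and (iii) the explicit three-factor decomposition. Parts (i) and (iii) are essentially arithmetic; the real content is part (ii).

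First, I would observe that every column of $B$ contains exactly one $+1$ entry (at the starting node) and one $-1$ entry (at the ending node), so every column sum of $B$ vanishes. This forces the last row of $B$ to equal the negative of the sum of the first $n-1$ rows, i.e. $[-\one^\tp B_{11},\; -\one^\tp B_{12}]$, justifying the stated partition without any hypothesis on the spanning tree.

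Next, I would prove $B_{11}$ unimodular by induction on $n$. The base case $n=1$ gives the empty $0\times 0$ matrix with determinant $1$. For $n\geq 2$, the subgraph on edges $1,\dots,n-1$ is a tree spanning $\nodes$, and any tree on at least two vertices has at least two leaves. Since at most one of those leaves can be vertex $n$, some leaf $v$ lies in $\{1,\dots,n-1\}$. The unique tree edge at $v$ contributes the only nonzero entry of row $v$ of $B_{11}$, which is $\pm 1$; deleting that row together with its corresponding column yields the reduced incidence matrix of a spanning tree on $n-1$ vertices, which by induction has determinant $\pm 1$. Laplace expansion along row $v$ then gives $|\det B_{11}|=1$. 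The main obstacle here is the combinatorial observation that a usable leaf can always be found among the retained rows $\{1,\dots,n-1\}$; everything else is standard cofactor bookkeeping.

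Finally, with $B_{11}$ invertible I would set $N := B_{11}^{-1} B_{12}$, so $B_{11}N = B_{12}$, and verify the factorization by direct block multiplication. The middle factor annihilates the right block-column of the rightmost upper-triangular factor, reducing the product to
\[
\bmat{B_{11} & 0 \\ -\one^\tp B_{11} & 0}\bmat{I & N \\ 0 & I}
= \bmat{B_{11} & B_{11}N \\ -\one^\tp B_{11} & -\one^\tp B_{11}N},
\]
which coincides with the partitioned form of $B$ once $B_{11}N$ is replaced by $B_{12}$. This step is mechanical.
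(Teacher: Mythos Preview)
Your argument is correct. The paper does not actually prove this theorem; it simply refers the reader to Theorem~2.10 of Bapat's \emph{Graphs and Matrices}. Your self-contained proof---column-sum zero for the partition, leaf-deletion induction for the unimodularity of $B_{11}$, and direct block multiplication for the factorization---is essentially the standard textbook argument that the citation points to, so there is no substantive divergence to compare.
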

\begin{proof}
  See for example Theorem~2.10 of~\cite{bapat}.
\end{proof}

For convenience, denote by $Z$ the $m \times (m-n+1)$ matrix
\[
Z = \bmat{-N \\ I}
\]
Then we have the following important property.
\begin{thm}
  \label{thm:incvec}
  Every column of $Z$ is the incidence vector of a cycle in $\mathcal{G}$.
\end{thm}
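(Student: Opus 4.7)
\bigskip

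\noindent\textbf{Proof proposal.} The plan is to identify each column of $Z$ with the incidence vector of a \emph{fundamental cycle} associated to a non-tree edge, using two ingredients: (i)~a direct verification that $BZ=0$, and (ii)~the uniqueness of a nullspace vector having a prescribed pattern on the non-tree edges. Because the first $n-1$ edges form a spanning tree and edges $n,\dots,m$ are the non-tree (chord) edges, the columns of $Z$ are naturally indexed by the chords.

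First, I would compute
\[
BZ = \bmat{B_{11} & B_{12} \\ -\one^\tp B_{11} & -\one^\tp B_{12}}\bmat{-N \\ I}
    = \bmat{-B_{11}N + B_{12} \\ \one^\tp B_{11}N - \one^\tp B_{12}},
\]
which vanishes since $N=B_{11}^{-1}B_{12}$. Hence every column of $Z$ lies in $\Null B$. Moreover, inspecting $Z$ shows that for each $k\in\{n,\dots,m\}$ the corresponding column $z^{(k)}$ has a $1$ in position $k$, zeros in all other non-tree positions $\{n,\dots,m\}\setminus\{k\}$, and arbitrary entries (namely $-N$) in the tree positions $1,\dots,n-1$.

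Next, I would invoke the standard fact that since edges $1,\dots,n-1$ form a spanning tree, adding the chord $e_k$ produces a unique cycle $C_k$ in $\mathcal{G}$ (the fundamental cycle of $e_k$ with respect to the tree). Orient this cycle so that $e_k$ traverses in the forward direction, and let $y^{(k)}$ be its incidence vector. Because incidence vectors of cycles lie in $\Null B$ (each interior vertex of a cycle contributes $+1$ and $-1$ to the relevant row of $B$), we have $By^{(k)}=0$. By construction, $y^{(k)}$ has a $1$ in position $k$ and $0$ in every other non-tree position, since a fundamental cycle uses exactly one chord.

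Finally, consider $d = z^{(k)} - y^{(k)}$. By the matching non-tree patterns just described, the last $m-n+1$ entries of $d$ vanish, so $d = \bmat{d_1 \\ 0}$ with $d_1\in\R^{n-1}$. Since $Bd=0$, the block $B_{11}d_1 = 0$, and by Theorem~\ref{thm:smith} the matrix $B_{11}$ is unimodular, hence invertible; thus $d_1=0$ and $z^{(k)}=y^{(k)}$. This identifies every column of $Z$ with an honest cycle incidence vector.

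The main obstacle I anticipate is the bookkeeping with signs and orientations: one must confirm that the orientation of $C_k$ chosen so that $e_k$ is forward is consistent with the $+1$ appearing in position $k$ of $z^{(k)}$, and that the $-N$ entries are thereby interpreted correctly as forward/backward traversals of the tree edges. The linear-algebraic core, however, is entirely captured by invertibility of $B_{11}$ together with the direct identity $BZ=0$.
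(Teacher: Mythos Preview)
Your argument is correct. Note, however, that the paper does not actually supply its own proof of this statement: it simply cites Chapter~5 of~\cite{bapat}. What you have written is precisely the standard textbook argument for fundamental cycles---show $BZ=0$, observe that each column of $Z$ agrees with the incidence vector $y^{(k)}$ of the fundamental cycle through chord $e_k$ on all non-tree coordinates, and then use invertibility of $B_{11}$ to force equality on the tree coordinates. Your anticipated sign bookkeeping is indeed the only place requiring care, and it resolves cleanly: with the cycle oriented so that $e_k$ is traversed forward, the contribution of each edge at each cycle vertex telescopes to zero in $By^{(k)}$, and the $+1$ in position $k$ matches the identity block of $Z$. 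So your proposal both matches the intended (cited) approach and fills in the details the paper omits.
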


\begin{proof}
  See, for example, Chapter 5 of~\cite{bapat}.
\end{proof}

Theorem~\ref{thm:smith} implies that the columns of $Z$ are a basis
for the null space of $B$, since $BZ=0$ and $\Null(Z) = \{0\}$. The columns
of $Z$ are called the
\eemph{fundamental cycles} of the graph. Note that each of the
fundamental cycles is associated with exactly one of the non-tree
edges of the graph.

\section{Logical synchrony networks}
\label{sec:logical-synchrony}

The goal of this section is to develop an abstraction which contains
two key things; first, a notion of ordering of events such as that of
Lamport~\cite{lamport}; and second, a notion of network latency.  It
turns out that these two ideas may be combined into a simple unified
abstraction, which we call the \emph{logical synchrony network}, and
this allows analysis of both causality and system
performance. We build an event model, in which events may be thought
of as ticks of a local clock at each node, corresponding to process
execution. The events at neighboring nodes are linked by data
transmission. There is no notion of global time, and yet
within this framework there is still a notion of latency and duration.
We show that ordering of events can be defined in a meaningful way
when round-trip latencies are positive.

We start with a formal definition of a logical synchrony network as a
directed graph with edge weights, as follows.

\begin{defn}
  A \eemph{logical synchrony network} is a directed graph
  $(\nodes,  \edges)$ together with a set of edge
  weights $\ugn : \edges  \rightarrow \Z$.
\end{defn}

In this model, each node corresponds to a processor, and an edge
between nodes $i \to j$ indicates that node~$i$ can send data along a
physical link to node~$j$.  Sent data is divided into tokens which we
refer to as \emph{frames}.

\paragraph{Local clocks.}
Every node has an infinite sequence of \eemph{events} associated with
it, which can be thought of as compute steps. The events at node $i$
are denoted $(i,\tau)$, where $\tau$ is referred to as a
\emph{localtick} and thereby implicitly defines a local clock.  We
define the set of all events
\[
\vext = \{ (i,\tau) \mid i\in\nodes, \tau \in \Z\}
\]
Events at one node are aligned to events at other nodes by the
transmission of frames. At localtick $\tau$ and node $i$, a frame is
sent from node $i$ to node $j$, and it arrives at node $j$ at
localtick $\tau + \lambda_\itoj$. The constant $\ugn_\itoj$ is called
the \emph{logical latency.} We define the following binary relation.

\begin{defn}
  \label{def:eg}
  Event $(i,\tau)$ is said to
  \eemph{directly send to} the event $(j,\rho)$ if $(i,j) \in \edges$
  and $\rho = \tau + \ugn_\itoj$, or $i =j $ and $\rho = \tau + 1$. We
  use the notation
  \[
  (i,\tau) \to (j,\rho)
  \]
  to mean $(i,\tau)$ directly sends to $(j,\rho)$, and define the set
  \[
  \eext = \{ \bl((i,\tau), (j,\rho)\br) \mid
    (i,\tau) \to (j,\rho)\}
  \]
  The graph $\gext = (\vext, \eext)$
  is called the \eemph{extended graph} of the logical synchrony network.
\end{defn}
This relation may be viewed as an infinite directed graph with vertex
set $\vext$ and directed edges $(i,\tau) \to (j,\rho)$.  In this
graph, those edges $(i,\tau) \to (j,\rho)$ for which $i=j$ are called
\emph{computational edges}. An edge that is not a computational edge
is called a \emph{communication edge}.  Figure~\ref{fig:extgraph}
illustrates a logical synchrony network and its corresponding extended
graph. Definition~\ref{def:eg} adds two types
  of edges to the extended graph. Computational edges are vertical
  in the figure, and they connect $(i,\tau)$ to $(i,\tau+1)$.
  These express the relationship between sequential events at node $i$.
      Communication edges are non-vertical, and connect $(i,\tau)$ to
      $(j,\tau + \ugn_\itoj)$. These express the relationship between
      the sending of a frame from node $i$ at time $\tau$ and its
      reception at node $j$ at time $\tau + \ugn_\itoj$.

\begin{figure}[htb]
  \hbox to \linewidth{\hss
    \vcent{\hbox{\includegraphics[width = 30mm]{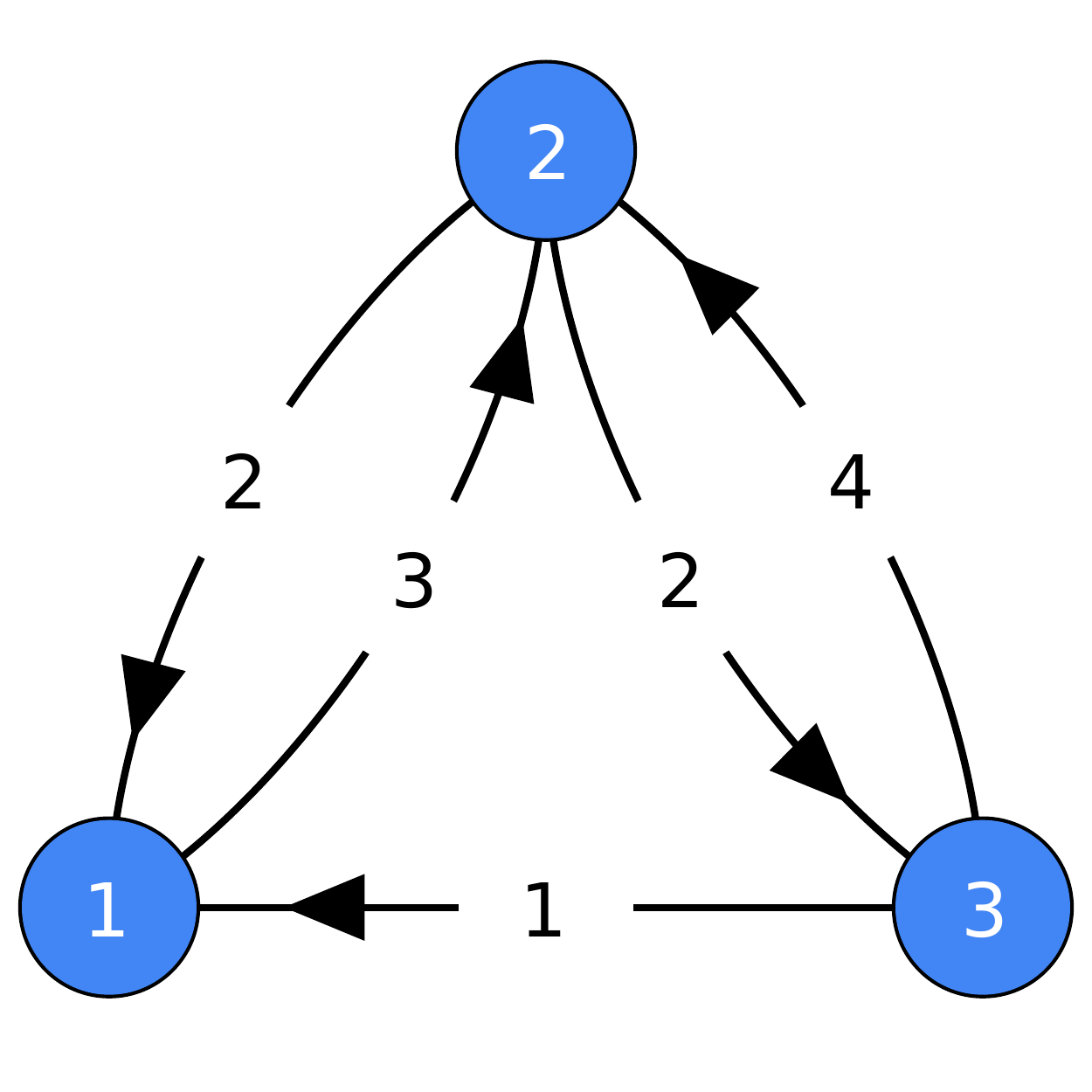}}}\hss
    \vcent{\hbox{\includegraphics[width = 40mm]{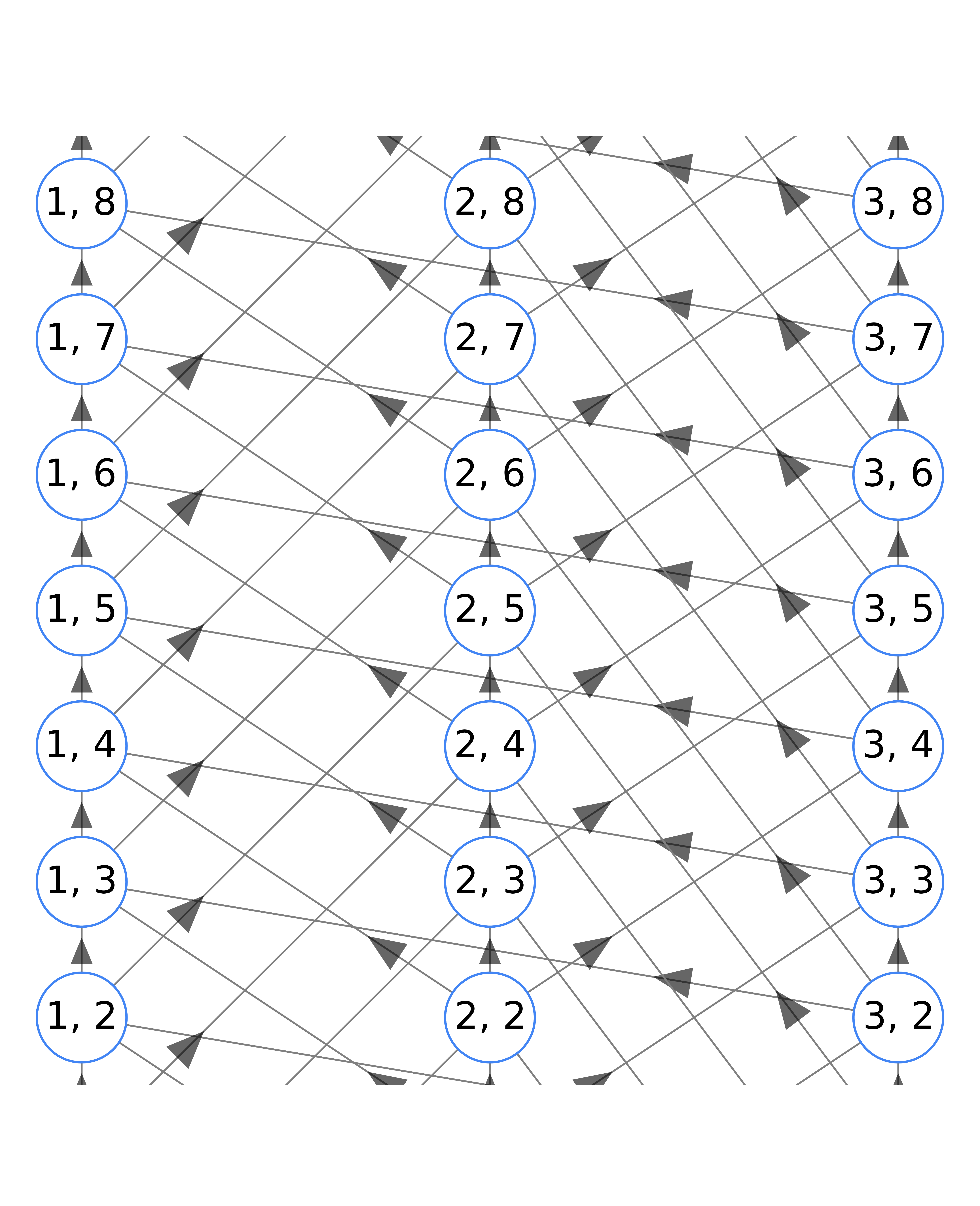}}}\hss}
  \caption{A logical synchrony network (edges labeled with $\lambda$)
  and corresponding extended graph.}
  \label{fig:extgraph}
\end{figure}

The localticks define a separate and ideal notion of local duration
at each node by counting events (\ie, frame transmissions or receptions.)
We can speak of the event $(i,\tau)$ as occurring at time $\tau$
localticks on node $i$.  We say that event $(i,\tau + a)$ happens $a$
localticks after event $(i,\tau)$, for any $a\in\Z$.  We cannot in general
compare clock values at two different nodes.

\paragraph{Execution.}
This model captures the local evolution of time at each node
$i\in\nodes$, and the transmission of frames between them. Although we
do not investigate execution models in this paper, it is possible to
define many different execution semantics. One simple choice is the
functional model, where frames carry data, and associated with each
event $(i,\tau)\in\vext$ in the extended graph we have a function,
which maps data from incoming edges to data on outgoing edges. Another
possibility is to have a more procedural model, where events in
$\vext$ correspond to the clock ticks of a processor in the
corresponding $\nodes$.  For the purposes of this paper it is not
necessary to specify how many bits each frame contains but we assume
all frames on a given link are equally sized.

The abstract models considered in this paper consist of sequences of
events which extend infinitely far into both the future and the past.
It is possible to extend this model to include system startup, for
example by introducing a minimum node within the extended graph, or by
modifying the execution model.  We do not address startup
within this paper.

\paragraph{Frames and logical latency.}

If $A$ denotes a particular frame sent
$i\to j$, then we will make use of the notation $\receive(A)$ to refer
to the localtick at node $j$ when $A$ arrives at $j$.  Similarly
$\send(A)$ refers to the localtick at node $i$ when $A$ was sent.
This notation leaves implicit the source and destination of frame $A$,
in that $i,j$ are not included as arguments of the $\send$ and
$\receive$ functions. We do not as yet assume any particular mechanism
for transmission of frames, but we assume that frames are received in
the order that they are sent, without any loss. Note that the logical
latency has no connection to \emph{physical latency}. If we were to
measure the send and receive times with respect to a global notion of
time, we would know that, for example, the receive time must be
greater than the send time.  In the framework presented here, that is
not the case; the localticks are strictly local, and as a result there
is no such requirement on their numerical value; the logical latency
$\lambda_\itoj$ may be negative. This is, of course, a statement about
the clocks, not about causality.

In words, the logical latency is the time of arrival \emph{in the
receiver's clock} minus the time of departure \emph{in the sender's
clock}.  There are several observations worth making about logical
latency.

\begin{itemize}

\item Logical latency is \emph{constant}. For any two nodes $i, j$,
  every frame sent $i \to j$ has the same logical latency.  It is a
  property of the edge $i \to j$ in $\edges$.

\item Despite the name, logical latency is not a measure of length
  of time or duration. It is not the case that if $\ugn_\itoj$ is
  greater than $\ugn_\ptoq$ then it takes longer for frames to
  move from $i$ to $j$ than it does for frames to move from $p$ to
  $q$. (In fact, we do not have a way within this framework to compare
  two such quantities.)

\item The logical latency can be negative.

\end{itemize}

\paragraph{Logical latencies and paths.}
Logical latencies add along a path. Suppose node $i$ sends a frame $B$
along edge $i \to j$ to node $j$, and then node $j$ forwards it
$j \to k$.  Then we have
\[
\receive(B) =  \send(B) + \ugn_\itoj + \ugn_\jtok
\]
This means that we can speak of the logical latency of the path $i \to
j \to k$ as being $\ugn_\itoj +\ugn_\jtok$, and more generally we can
define the logical latency of a directed path
$\mathcal P=v_0,s_0,v_1,s_1,\dots,s_{k-1},v_k$ from node $v_0$ to
node $v_k$ in $\mathcal G$.
The logical latency is
path dependent; two paths with the same endpoints may have different
logical latencies.  We have
\[
\lambda_{\mathcal P} = \sum_{i=0}^{k-1} \lambda_{s_i}
\]
This makes sense, which is potentially surprising because we are
measuring arrival and departure times with different clocks.  Since
frames are being relayed, there may be additional delay at
intermediate nodes (\ie, additional compute steps)
which would need to be included when determining the destination event.
Logical latencies are defined such that they do not included
this additional delay.

\subsection{Ordering of events}

A fundamental question regarding causality arises in the study of
distributed systems.  Given two events, we would like to determine
which happened first. In a nonrelativistic physical setting, such a
question is well-defined. In a relativistic setting, there are events
which are separated in space for which the relative order is
undetermined --- the order depends on the observer. Something similar
happens in distributed systems, as was pointed out by
Lamport~\cite{lamport}.  Given two events, instead of asking which
event happened first, a more useful question is to ask which event, if
any, \emph{must have} happened first.  The framework for distributed
clocks developed by Lamport~\cite{lamport} established that there is a
partial ordering on events determined by one event's ability to
influence another by the sending of messages. In that paper the author
defines a global notion of time consistent with said partial
order. Subsequent work~\cite{fidge,mattern} defines \emph{vector
clocks} which assign a vector-valued time to events for which the
partial ordering is equivalent to that defined by message-passing.  We
would like to construct the corresponding notion of causality in a
logical synchrony network.

We define below the $\sqsubset$ relation, which can be used to define
a partial order on $\gext$  provided we can ensure that it is
acyclic. To do this, we consider round-trip times.

\paragraph{Round trip times.}
Logical latencies are not physical latencies, despite the additive
property. However, there is one special case where logical latency is
readily interpreted in such physical terms, specifically the time for
a frame $A$ to traverse a cycle in the graph, the cycle round-trip
time.  Suppose $\mathcal C = v_0,s_0,v_1,s_1,\dots,s_{k-1},v_k$ is a
directed cycle, then
\[
\lambda_\mathcal{C} = \receive(A) - \send(A)
\]
is the round-trip time measured in localticks. Two different cycles
from a single node $i$ may have different round-trip times, and these
are comparable durations since they are both measured in localticks at
that node. We have
\[
\lambda_\mathcal{C} = \sum_{i=0}^{k-1} \ugn_{s_i}
\]
We make the following definition.
\begin{defn}
  \label{defn:lsacyclic}
  A logical synchrony network is said to have \eemph{positive
    round-trip times} if, for every directed cycle $\mathcal C $ in
  the graph $\mathcal G$ we have $\lambda_\mathcal{C} > 0$.
\end{defn}

We then have the following result, which says that if the round-trip
times around every directed cycle in the logical synchrony network are
positive, then the extended graph is acyclic.
\begin{thm}
  If a logical synchrony network has positive round-trip times then its
  extended graph is acyclic.
\end{thm}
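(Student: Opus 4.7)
The plan is to argue by contradiction: suppose $\gext$ contains a directed cycle
\[
(i_0, \tau_0) \to (i_1, \tau_1) \to \cdots \to (i_n, \tau_n) = (i_0, \tau_0),
\]
and deduce that some directed cycle in $\mathcal{G}$ must have nonpositive round-trip time, contradicting the hypothesis. The key observation is that every edge in $\gext$ strictly advances the localtick coordinate by a deterministic amount: a computational edge advances it by $+1$, and a communication edge from $(i_k, \tau_k)$ to $(i_{k+1}, \tau_{k+1})$ advances it by $\lambda_{i_k \to i_{k+1}}$. Telescoping the increments around the assumed cycle forces
\[
\sum_{k=0}^{n-1} (\tau_{k+1} - \tau_k) = \tau_n - \tau_0 = 0.
\]

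Next, I would project the cycle onto $\mathcal{G}$ by deleting the computational steps (those with $i_k = i_{k+1}$) from the sequence $i_0, i_1, \ldots, i_n$. The remaining sequence is a closed directed walk in $\mathcal{G}$ using only communication edges. A standard graph-theoretic fact is that any closed directed walk decomposes as an edge-disjoint union of directed cycles $\mathcal{C}_1, \ldots, \mathcal{C}_p$ in $\mathcal{G}$: iteratively locate the first repeated vertex, peel off the directed cycle between its two occurrences, and recurse on the shortened walk. Letting $c \geq 0$ count the computational edges in the original $\gext$-cycle, the telescoped identity rewrites as
\[
c + \sum_{k=1}^{p} \lambda_{\mathcal{C}_k} = 0.
\]

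By the positive round-trip hypothesis, each $\lambda_{\mathcal{C}_k} > 0$, and $c \geq 0$, so equality to zero forces both $p = 0$ and $c = 0$. But this says the assumed cycle has no edges at all, contradicting that a cycle contains at least one edge (in particular, $n \geq 1$ so $c + \text{(communication edges)} \geq 1$).

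The only mildly delicate step is the walk-to-cycles decomposition, where one must verify the iterative peeling terminates and produces directed (not merely undirected) cycles --- this works because every step in the projected walk respects the orientation of $\mathcal{G}$, so each peeled-off subwalk inherits forward orientation. The two corner cases to be careful about are (i) a cycle consisting entirely of computational edges, where $p=0$ but $c=n\geq 1>0$ yields the contradiction directly, and (ii) a cycle containing a single communication step that projects to a multi-edge walk; both are handled uniformly by the identity above.
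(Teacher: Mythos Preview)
Your argument is correct and follows the same route as the paper: assume a directed cycle in $\gext$, telescope the localtick increments to zero, and separate the computational edges (each contributing $+1$) from the communication edges. The paper's version is terser at the key step---it simply asserts that the communication contribution is positive ``due to the assumption that the logical synchrony graph has positive round-trip times''---whereas you make explicit that the projection to $\mathcal G$ is a closed directed walk and decompose it into directed cycles, which fills in exactly the detail the paper elides.
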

\begin{proof}
  Suppose for the purpose of a contradiction that the extended graph is cyclic. Then
  there exists a directed cycle
  $\mathcal C_1 =  v_0,s_0,v_1,s_1,\dots,s_{k-1},v_k$
  where each $v_j \in \vext$ is a pair $v_j
  = (i_j,\tau_j)$.
  Since the start and end node is the same, we have
  \begin{equation}
    \label{eqn:contrad}
    \begin{aligned}
    0 & = \sum_{j=1}^{k-1} (\tau_{j+1} - \tau_j) \\
    &= \sum_{j \in C_\text{comp}}
      (\tau_{j+1} - \tau_j)
    + \sum_{j \not\in C_\text{comp}} (\tau_{j+1} - \tau_j)
    \end{aligned}
  \end{equation}
  where $C_\text{comp}$ is the set of indices $j$ such that
  $(v_j,v_{j+1})$ is a computational edge.  Each of the computational
  edges has $\tau_{j+1} - \tau_j = 1$.  If all of the edges in the
  graph are computational then the right-hand side is positive.  If
  there are some communication edges, then the second of the two terms
  on the right-hand side is positive due to the assumption that the
  logical synchrony graph has positive round-trip times, and again the
  right-hand-side is positive. This contradicts the claim that the sum
  is zero.
\end{proof}

This acyclic property is necessary for an execution model based on
function composition to be well-defined.  It also allows us to define
a temporal partial ordering between events in $\gext$. Since a logical
synchrony network with positive round-trip times has an extended graph
which is acyclic, the reachability relation on the extended graph
defines a partial order.  Specifically, we write
\[
 (i,\tau) \sqsubset (j,\rho)
\]
if there is a directed path from $(i,\tau)$ to $(j,\rho)$ in the
extended graph.  Here, the notation is meant to be similar to $<$,
indicating \emph{comes before}. Under these conditions, a logical
synchrony network is a distributed system in the sense of
Lamport~\cite{lamport}, with logical latencies providing strict
inter-event timings at any node $i\in\nodes$.
The partial ordering on the induced logical synchrony network has
exactly the property that, if $u \sqsubset v$, then $u$ must have
happened before $v$.

\section{Equivalence of LSNs}

The goal of this section is to establish an invariant,
which we will use in the subsequent sections to analyze system
correctness.  We introduce the idea of clock relabeling, which
modifies logical latencies while preserving the interconnection of
events and the underlying physical system. We show that the round trip
times, being physically measurable properties, cannot change. We use
this invariant to characterize when two networks are physically the
same, even though their clock labels may be different.

Two logical synchrony networks may have different logical latencies,
but be nonetheless equivalent for the purpose of executing
processes. An example is given by the graphs in
Figure~\ref{fig:triequiv}.

\begin{figure}[htb]
  \hbox to \linewidth{\hss
    \includegraphics[width=30mm]{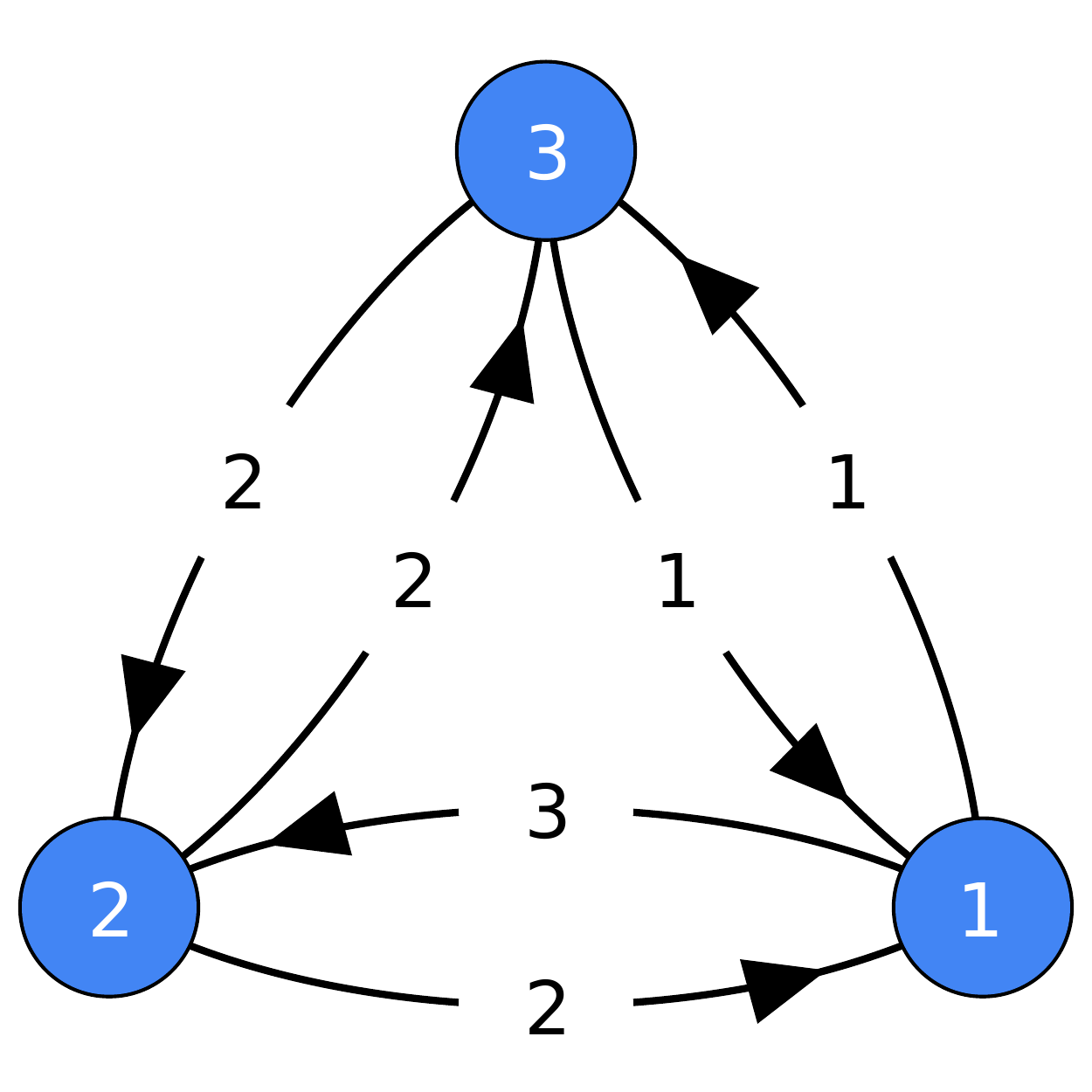}\hss
    \includegraphics[width=30mm]{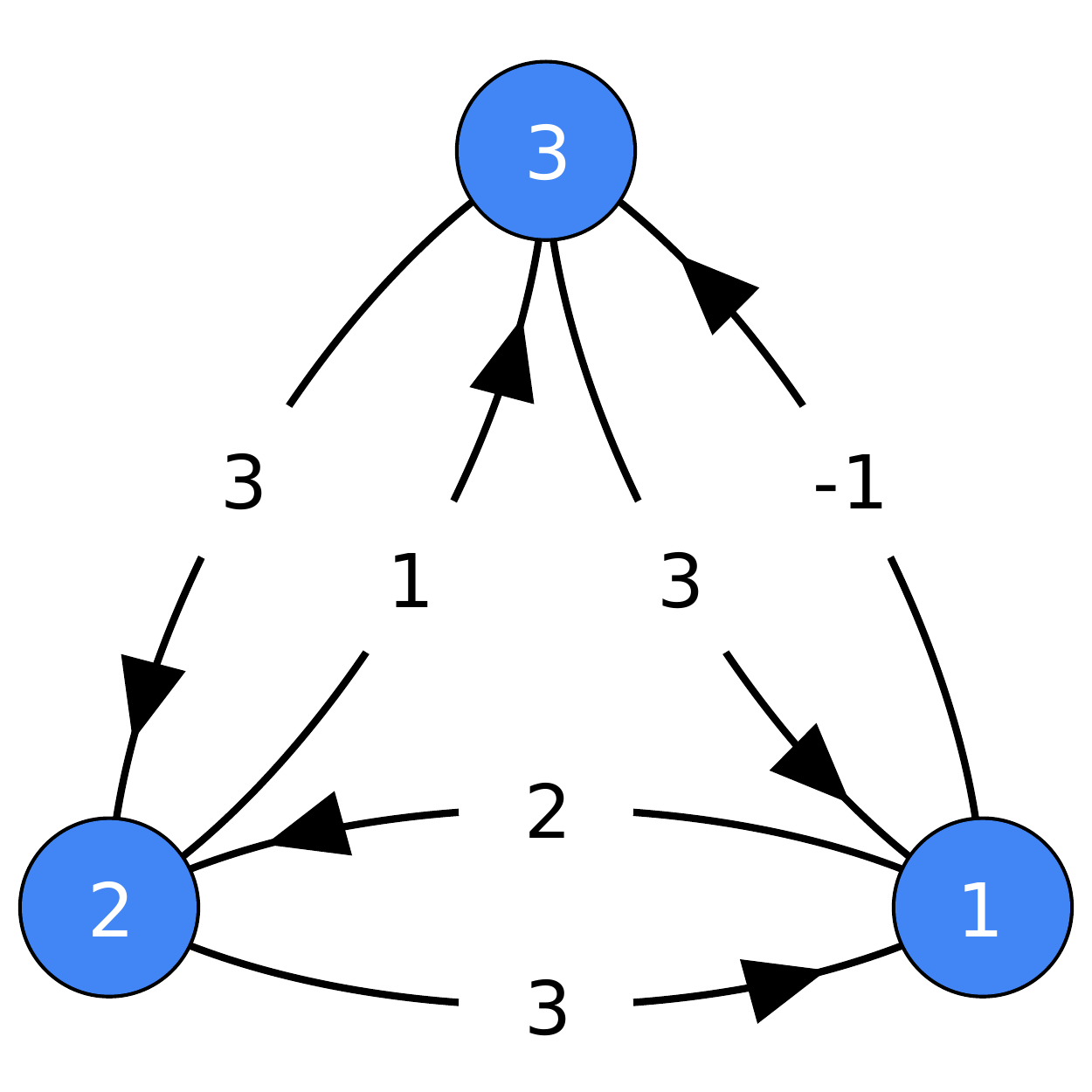}\hss}
  \caption{Two equivalent logical synchrony graphs (edges labeled
    with~$\lambda$). Relabeling the clocks using $c=(1,2,3)$ maps the
    left-hand graph to the right-hand one.}
  \label{fig:triequiv}
\end{figure}

This arises because we can relabel the events. Specifically, given a
logical synchrony network with events $\vext$, we define a new logical
synchrony network. Given $c_1,\dots,c_n\in\Z$, we relabel event
$(i,\tau)$ as $(i, \tau + c_i)$. This is a relabeling of the vertices
of the graph $\gext$. In $\gext$ we have edges
\[
(i,\tau) \to (j, \tau + \lambda_\itoj)
\]
for every $i\neq j\in\nodes$ and $\tau \in \Z$. Under the relabeling,
these are mapped to
\[
(i,\tau + c_i) \to (j, \tau + \lambda_\itoj + c_j)
\]
and since there is such an edge for all $\tau \in \Z$ the edge
set of the relabeled extended graph is
\[
\hateext = \{
\bl((i,\tau), (j, \tau + \lambda_\itoj + c_j - c_i)\br)
\mid i,j\in\nodes, \tau \in \Z\}
\]
This is the extended graph for a logical synchrony network
with logical latencies
\[
\hat\ugn_\itoj= \ugn_\itoj + c_j - c_i
\]
This leads us to the following definition of equivalence.
\begin{defn}
  Suppose we have two logical synchrony networks on a directed graph
  $(\nodes,  \edges)$, with edge weights $\ugn$ and $\hat\ugn$. We
  say these LSNs
  are \eemph{equivalent} if there exists $c_1,\dots,c_n \in \Z$ such that,
  for all $i,j\in\mathcal V$,
  \begin{equation}
    \label{eqn:equiv}
    \hat\ugn_\itoj = \ugn_\itoj + c_j - c_i
  \end{equation}
\end{defn}
\noindent We can write this equation as
\[
\ugn - \hat\ugn  = B^\tp c
\]
where $B$ is the incidence matrix of $\mathcal G$.  Relabeling the
clocks results in a relabeling of the corresponding extended
graph. Since this only changes the labels of the nodes, not how the
nodes are interconnected, any code which is executable on one graph
may also be executed on the other (but any references to particular
localticks will need to be changed.)  Physically measurable properties
such as round-trip times cannot change under such a simple
relabeling. We have
\begin{prop}
  \label{prop:same}
If two LSNs are equivalent, they will have the same round
trip times on every directed cycle.
\end{prop}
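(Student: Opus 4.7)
The plan is to express the round-trip time of a directed cycle as a linear functional of the edge-weight vector and then show that the equivalence transformation adds nothing to this functional.

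First I would fix a directed cycle $\mathcal C$ in $\mathcal G$ and let $x\in\R^m$ be its incidence vector. Since $\mathcal C$ is directed, every edge of $\mathcal C$ appears with forward orientation, so $x$ has entries in $\{0,1\}$. By the logical-latency-adds-along-a-path identity established earlier, the round-trip time of $\mathcal C$ is
\[
\lambda_{\mathcal C} = \lambda^\tp x,
\]
where $\lambda\in\R^m$ is the vector of edge weights. The same identity, applied to the LSN with weights $\hat\lambda$, gives $\hat\lambda_{\mathcal C} = \hat\lambda^\tp x$.

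Next I would invoke the equivalence hypothesis in its matrix form $\hat\lambda - \lambda = B^\tp c$ (already derived just above the proposition) and compute
\[
\hat\lambda_{\mathcal C} - \lambda_{\mathcal C}
= (\hat\lambda - \lambda)^\tp x
= (B^\tp c)^\tp x
= c^\tp (Bx).
\]
It then remains to show $Bx = 0$. This is immediate for a directed cycle: each vertex on $\mathcal C$ is the head of exactly one cycle edge and the tail of exactly one cycle edge, so the corresponding row of $Bx$ contains exactly one $+1$ and one $-1$; vertices off $\mathcal C$ contribute $0$. Alternatively, one can appeal to Theorem~\ref{thm:incvec} together with the fact that the columns of $Z$ span $\Null(B)$, so any cycle's incidence vector lies in $\Null(B)$.

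Concluding, $\hat\lambda_{\mathcal C} = \lambda_{\mathcal C}$ for every directed cycle $\mathcal C$. There is no genuine obstacle here; the only thing worth being careful about is confirming that the incidence-vector formalism from Section~1.1 applies cleanly to directed cycles (so that all signs are $+1$ and no sign-convention subtleties intervene), after which the proof is essentially the one-line computation $c^\tp B x = 0$.
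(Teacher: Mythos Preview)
Your proof is correct and is essentially the paper's argument recast in matrix language: the paper simply sums equation~\eqref{eqn:equiv} around the cycle so that the $c_j - c_i$ terms telescope to zero, which is exactly the scalar content of your identity $c^\tp Bx = 0$. The only cosmetic discrepancy is a sign (the paper writes $\lambda - \hat\lambda = B^\tp c$ rather than $\hat\lambda - \lambda = B^\tp c$), which of course does not affect the conclusion.
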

\begin{proof}
  The round-trip times
  for a directed cycle $\mathcal C = v_0,s_0,v_1,s_1,\dots,s_{k-1},v_k$
  in $\mathcal G$ satisfy
  \[
  \sum_{j=0}^{k-1} \ugn_{s_j}
  =
  \sum_{j=0}^{k-1} \hat\ugn_{s_j}
  \]
  which follows from equation~\eqref{eqn:equiv}.
\end{proof}

The converse is not generally true, as the following example shows.
\begin{figure}[htb]
  \hbox to \linewidth{\hss
    \includegraphics[width=30mm]{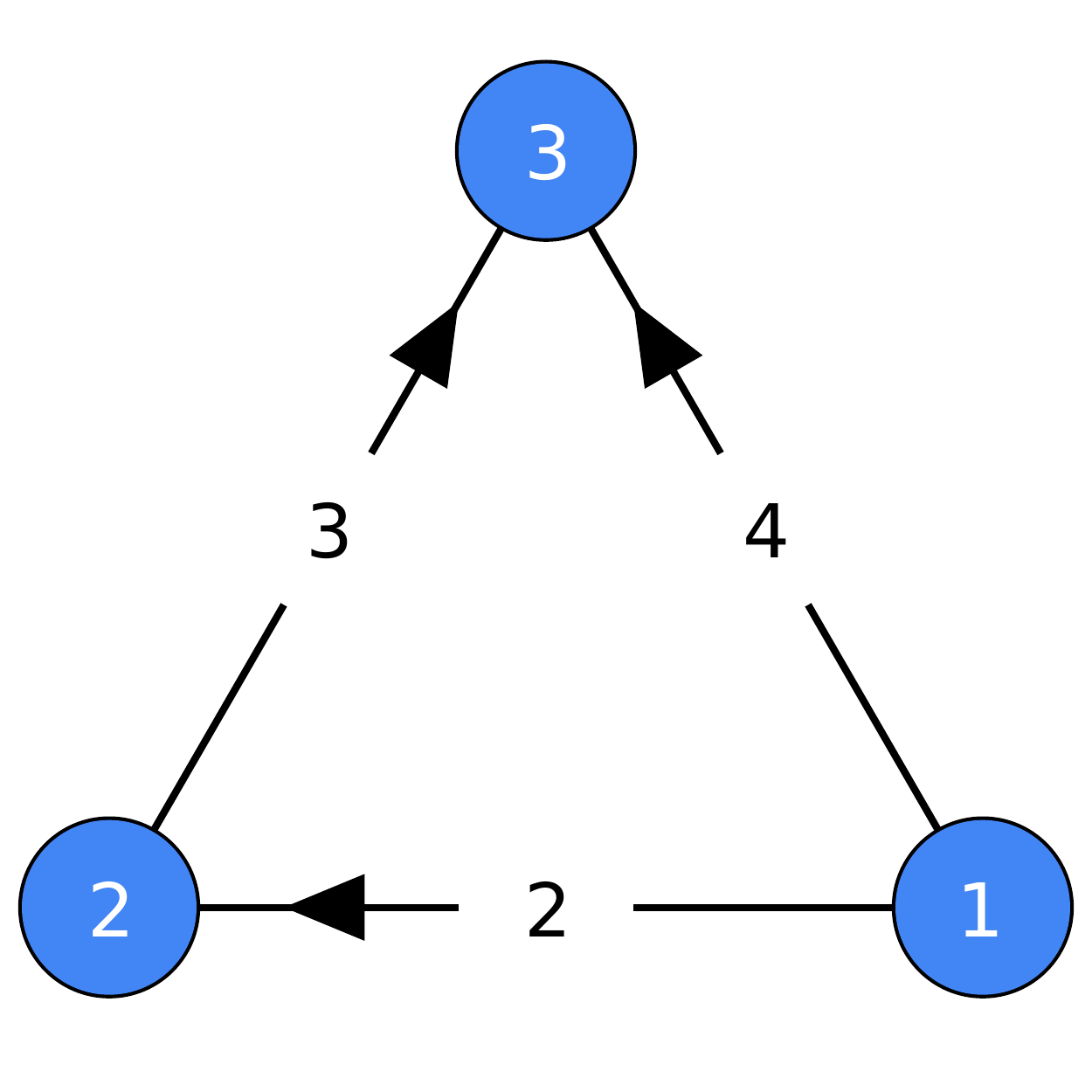}\hss
    \includegraphics[width=30mm]{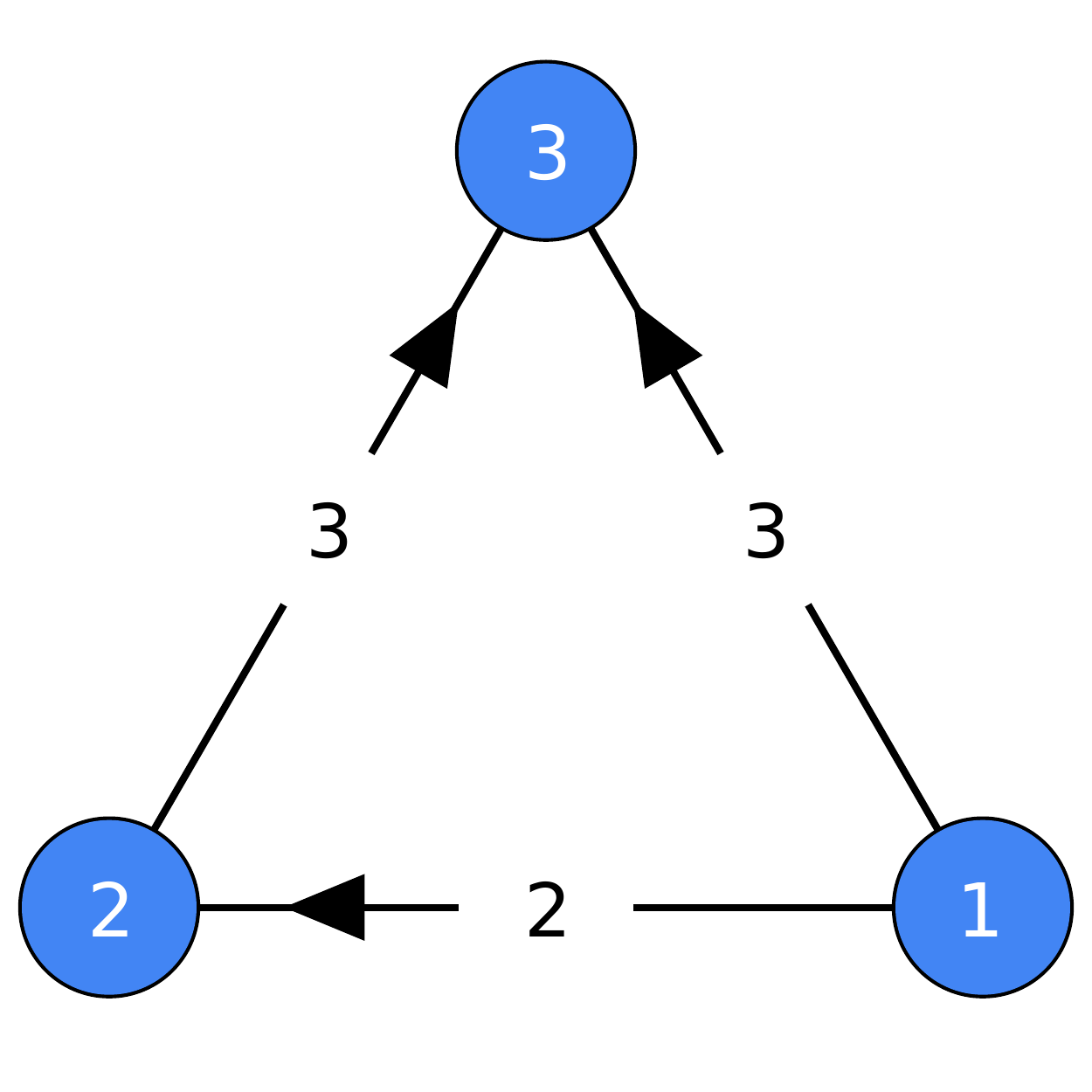}\hss}
  \caption{Two non-equivalent logical synchrony graphs with no
  directed cycles (edges labeled with~$\lambda$)}
  \label{fig:tri1}
\end{figure}

\begin{example}
  \label{ex:rtdirectedcycles}
  Consider the logical synchrony networks shown in
  Figure~\ref{fig:tri1}.  Both networks have the same underlying graph,
  which has no directed cycles, and so the round trip times on every
  directed cycle are trivially equal on both networks.
  If we order the edges $((1\to 2), (2\to 3), (1 \to 3))$ then we have
  incidence matrix
  \[
  B = \bmat{ 1 & 0 & 1 \\
    -1 & 1 & 0 \\
    0 & -1 & -1}
  \]
  which has $\rank(B)=2$. In the left-hand network of
  Figure~\ref{fig:tri1} the logical latencies are $\lambda_1 = 2$,
  $\lambda_2 = 3$ and $\lambda_3 = 4$, and in the right-hand network
  they are $\hat\lambda_1 = 2$ $ \hat\lambda_2 = 3$ and $\hat\lambda_3
  = 3$.
  Therefore
  \begin{equation}
    \label{eqn:counter}
    \ugn - \hat \ugn = \bmat{ 0 \\ 0 \\ 1}
  \end{equation}
  and there is no vector $c$ such that $\ugn - \hat \ugn = B^\tp c$.
\end{example}

If the round trip times are equal around every cycle, accounting for
signs and orientations, then the two logical synchrony networks are
equivalent. To show this, we need a preliminary result.

\begin{lemma}
  \label{lem:intdual}
  Let the graph be connected.
  Suppose $y \in\Z^m$, and for every cycle $\mathcal C$ we have
  $y^\tp x  = 0$ for the corresponding incidence vector $x$. Then
  $y =  B^\tp c $ for some $c\in \Z^n$.
\end{lemma}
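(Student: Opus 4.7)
The plan is to reduce the hypothesis to a statement about the fundamental cycles and then lift a real potential to an integer one using the unimodularity from Theorem~\ref{thm:smith}. By Theorem~\ref{thm:incvec}, the columns of $Z = \bmat{-N \\ I}$ are incidence vectors of the $m-n+1$ fundamental cycles of $\mathcal{G}$, so the hypothesis specializes to $y^\tp Z = 0$. Partitioning $y = \bmat{y_T \\ y_N}$ according to the spanning-tree/non-tree split of the edges, with $y_T \in \Z^{n-1}$ and $y_N \in \Z^{m-n+1}$, and expanding this equation yields $y_N = N^\tp y_T$.

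With this reduction in hand, the construction of $c$ is explicit. Theorem~\ref{thm:smith} asserts that $B_{11}$ is unimodular, so $B_{11}^{-\tp}$ has integer entries and $c' := B_{11}^{-\tp} y_T$ lies in $\Z^{n-1}$. Setting $c = \bmat{c' \\ 0} \in \Z^n$, a block computation using the partition of $B$ from Theorem~\ref{thm:smith} gives
\[
B^\tp c = \bmat{B_{11}^\tp c' \\ B_{12}^\tp c'} = \bmat{y_T \\ N^\tp y_T} = \bmat{y_T \\ y_N} = y,
\]
where the second equality uses $N = B_{11}^{-1} B_{12}$, so that $B_{12}^\tp B_{11}^{-\tp} = N^\tp$.

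The essential difficulty is integrality. Over $\R$, the conclusion that $y$ lies in the column span of $B^\tp$ follows immediately from $y^\tp Z = 0$, since the columns of $Z$ form a basis for $\Null(B)$, and any least-squares preimage in $\R^n$ works. The nontrivial content of the lemma is that when $y$ itself is integer, one can choose an integer preimage $c$; this is exactly what unimodularity of $B_{11}$ delivers, and it relies on connectedness of $\mathcal{G}$ to guarantee a spanning tree of the required size, on which Theorem~\ref{thm:smith} rests.
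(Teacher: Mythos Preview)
Your proof is correct and follows essentially the same approach as the paper: choose $c = \bmat{B_{11}^{-\tp} y_T \\ 0}$, invoke unimodularity of $B_{11}$ for integrality, and use orthogonality of $y$ to the fundamental cycle vectors to verify the non-tree block. The only cosmetic difference is that you extract the relation $y_N = N^\tp y_T$ up front and then compute $B^\tp c$ directly from the block partition of $B$, whereas the paper routes the same computation through the factorization of $B$ in Theorem~\ref{thm:smith} and applies the cycle condition at the end.
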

\begin{proof}
  Pick a spanning tree, and partition $B$ according to the spanning tree.
  Let $N = B_{11}^{-1} B_{12}$.
  Partition $y$ according to
  \[
  y = \bmat{y_1 \\ y_2}
  \]
  where $y_1\in\Z^{n-1}$.   We choose
  \[
  c = \bmat{B_{11}^{-\tp} y_1 \\ 0}
  \]
  and note that since $B_{11}$ is unimodular  $c$ must be integral.
  Using Theorem~\ref{thm:smith} we have
  \begin{align*}
  B^\tp c
  &= \bmat{I & 0 \\ N^\tp & I} \bmat{I & 0 \\ 0 & 0}
  \bmat{ B_{11}^\tp & -B_{11}^\tp \one \\ 0 & 1}\bmat{B_{11}^{-\tp} y_1 \\ 0}
  \\
  &=  \bmat{I & 0 \\ N^\tp & I} \bmat{y_1 \\ 0} \\
  &= \bmat{y_1 \\ y_2}
  \end{align*}
  as desired,   where in the last line we use Theorem~\ref{thm:incvec} to show that
  \[
  y^\tp\bmat{-N  \\ I}  = 0
  \]
  since $y$ is orthogonal to the incidence vectors of the fundamental cycles.
\end{proof}

We now state and prove a variant of Proposition~\ref{prop:same} which is both
necessary and sufficient.
\begin{thm}
  Suppose we have two logical synchrony networks on a connected directed
  graph $(\nodes, \edges)$, with edge weights $\ugn$ and
  $\hat\ugn$. These networks are equivalent if and only if they have the
  same signed round trip times on every cycle in $\mathcal G$. That is,
  for every cycle $\mathcal C=  v_0,s_0,v_1,s_1,\dots,s_{k-1},v_k$ we have
  \begin{equation}
    \label{eqn:signedcycles}
    \sum_{j=0}^{k-1} \ugn_{s_j} o_j
    =
    \sum_{j=0}^{k-1} \hat\ugn_{s_j} o_j
  \end{equation}
  where $o_j$ is the orientation of edge $s_j$ on the cycle $\mathcal C$.
\end{thm}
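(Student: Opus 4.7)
The plan is to reformulate both sides of the iff in terms of the incidence matrix $B$ and cycle incidence vectors, then reduce the nontrivial direction directly to Lemma~\ref{lem:intdual}. Setting $y = \lambda - \hat\lambda \in \Z^m$, the definition of equivalence is precisely $y = B^\tp c$ for some $c \in \Z^n$ (matching the paper's convention that $(B^\tp c)_{\itoj} = c_i - c_j$, which flips sign to give $\hat\lambda_\itoj = \lambda_\itoj + c_j - c_i$). On the cycle side, for a cycle $\mathcal C = v_0,s_0,\dots,s_{k-1},v_k$ with orientations $o_j \in \{\pm 1\}$ and incidence vector $x \in \R^m$, the hypothesis~\eqref{eqn:signedcycles} is exactly the single scalar equation $y^\tp x = 0$. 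So the theorem reduces to: $y = B^\tp c$ for some $c\in\Z^n$ if and only if $y^\tp x = 0$ for the incidence vector $x$ of every cycle.

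For the forward implication, I would simply compute $y^\tp x = (B^\tp c)^\tp x = c^\tp (Bx) = 0$, using that every cycle incidence vector lies in $\Null(B)$ (this is the content of the ``cycles $\Rightarrow$ null vectors'' half of Theorem~\ref{thm:incvec}, equivalently the observation that each vertex appears the same number of times as a head and as a tail around a cycle). This extends Proposition~\ref{prop:same} from directed cycles to all signed cycles with essentially no extra work.

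For the reverse implication, I would invoke Lemma~\ref{lem:intdual} directly: since $y \in\Z^m$ and $y^\tp x = 0$ for every cycle incidence vector (in particular for the fundamental cycles supplied by Theorem~\ref{thm:smith} and Theorem~\ref{thm:incvec}), the lemma produces $c\in \Z^n$ with $y = B^\tp c$, which is the required equivalence. Connectedness of $\mathcal G$ is used here exactly as in the lemma, to ensure a spanning tree exists and that $B_{11}$ is unimodular so that $c$ is integer-valued.

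The only real subtlety to be careful about is sign bookkeeping: the cycle in the hypothesis is an arbitrary (undirected) cycle traversed in some direction, so the orientations $o_j$ must match the signs in the incidence vector definition on page~\pageref{thm:incvec}, and the $\pm 1$ pattern of $B$ must be lined up with the $c_j - c_i$ convention in~\eqref{eqn:equiv}. Once those conventions are fixed, the proof is a two-line reduction to Lemma~\ref{lem:intdual}; no further structural argument is needed, and there is no real obstacle beyond this bookkeeping.
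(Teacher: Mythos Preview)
Your proposal is correct and follows essentially the same route as the paper: set $y=\lambda-\hat\lambda$, rewrite the cycle condition as $y^\tp x=0$ for every cycle incidence vector, and then invoke Lemma~\ref{lem:intdual} for the nontrivial direction. The paper's proof is terser (it omits the forward direction, treating it as immediate from $Bx=0$), but your added remarks on sign conventions and the explicit use of $c^\tp Bx=0$ are exactly the bookkeeping the paper leaves implicit.
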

\begin{proof}
  Equation~\eqref{eqn:signedcycles} means that
  for every cycle $C$ with incidence vector $x$ we have
  \[
  (\ugn - \hat \ugn)^\tp x = 0
  \]
  Then Lemma~\ref{lem:intdual} implies that $\ugn - \hat \ugn= B^\tp c$
  for some integer vector $c$, and hence $\ugn$ and $\hat \ugn$ are equivalent.
\end{proof}

What this means, in particular, is that in
Example~\ref{ex:rtdirectedcycles} the graph does not have a directed
cycle but it does have a cycle, where edges $1\to 2$ and $2\to 3$ are
oriented in the forward direction, and edge $1 \to 3$ is oriented in
the backward direction. Then $\ugn$ and $\hat \ugn$ are equivalent if
and only if
\[
\ugn_1 + \ugn_2 - \ugn_3 =
\hat\ugn_1 + \hat\ugn_2 - \hat \ugn_3
\]
Since this does not hold for $\lambda$ and $\hat\lambda$ in
that example, those two networks are not equivalent.

One cannot verify equivalence by checking pairs of nodes. That is, it
is not sufficient to simply check the length-2 round trip times, as
the following example shows.

\begin{figure}[htb]
  \hbox to \linewidth{\hbox to 0.5\linewidth{\hss
      \includegraphics[width = 30mm]{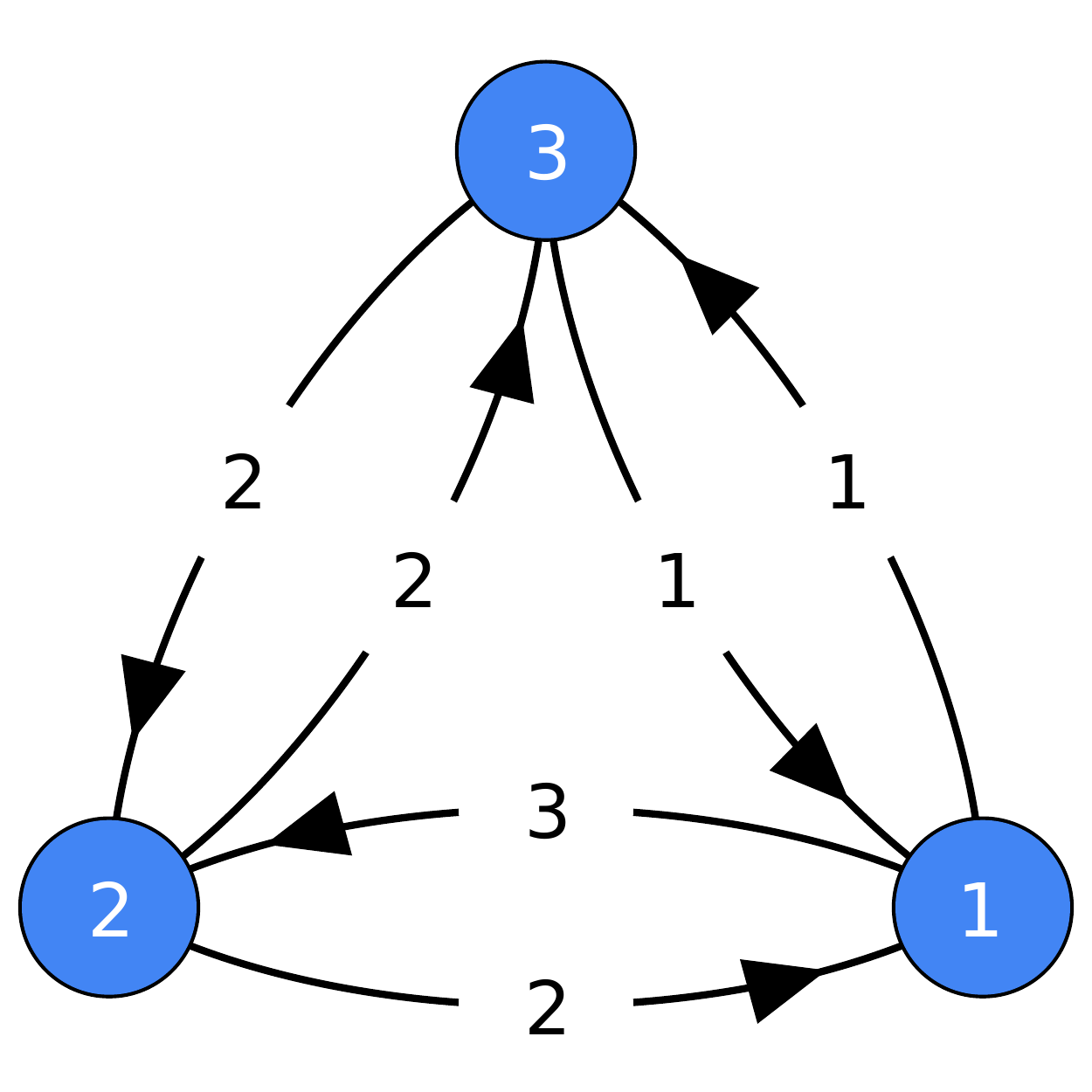}\hss}%
    \hbox to 0.5\linewidth{\hss
      \includegraphics[width = 30mm]{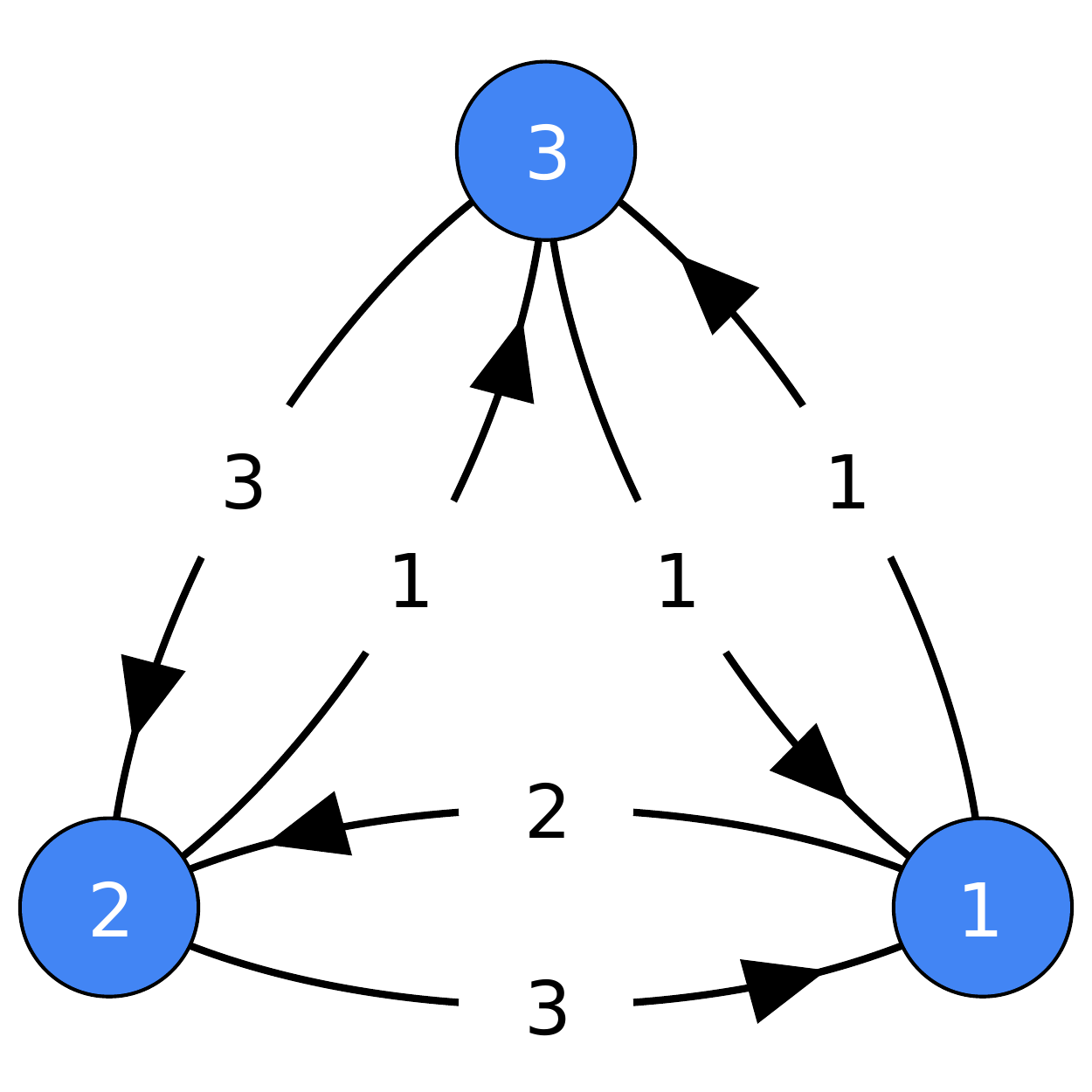}\hss}}
  \hbox to \linewidth{\hss(a)\hskip0.44\linewidth(b)\hss}
  \caption{Logical synchrony networks for Example~\ref{ex:longcycle}}
  \label{fig:lsnex}
\end{figure}

\begin{example}
  \label{ex:longcycle}
  Suppose $\mathcal G$ is the complete
  graph with 3 nodes. For the two logical synchrony networks,
  shown in Figure~\ref{fig:lsnex},
  the length-2 round trip times are
  \begin{align*}
    \ugn_{1 \shortto 2 \shortto 1} &= 5 \\
    \ugn_{2 \shortto 3 \shortto 2} &= 4 \\
    \ugn_{1 \shortto 3 \shortto 1} &= 2
  \end{align*}
  and they are the same for $\hat\ugn$.  However, these networks are not
  equivalent. There is no way to relabel so that the logical
  latencies are the same. This is because the length-3 round trip times
  are $\ugn_{1 \shortto 2 \shortto 3 \shortto 1} = 6$
  and~\hbox{$\hat\ugn_{1 \shortto 2 \shortto 3 \shortto 1} = 4$}.
\end{example}

\paragraph{Invariants.} As shown by the above results, round-trip times
around directed cycles are invariant under relabeling. Cycles
which are not directed also result in invariants which may be
physically measured and interpreted. We give some examples below.

\begin{example}
  \label{ex:invtri}
  Figure~\ref{fig:invtri} shows a triangle graph in which node 1 sends
  frame $A$ to node 3, and simultaneously sends frame $B$ to node 3
  via node 2. Then $\receive(B) - \receive(A)$ is measured in
  localticks at node 3, and it is invariant under relabeling.
\end{example}

\begin{figure}[htb]
  \hbox to \linewidth{\hss\begin{overpic}[width=30mm]{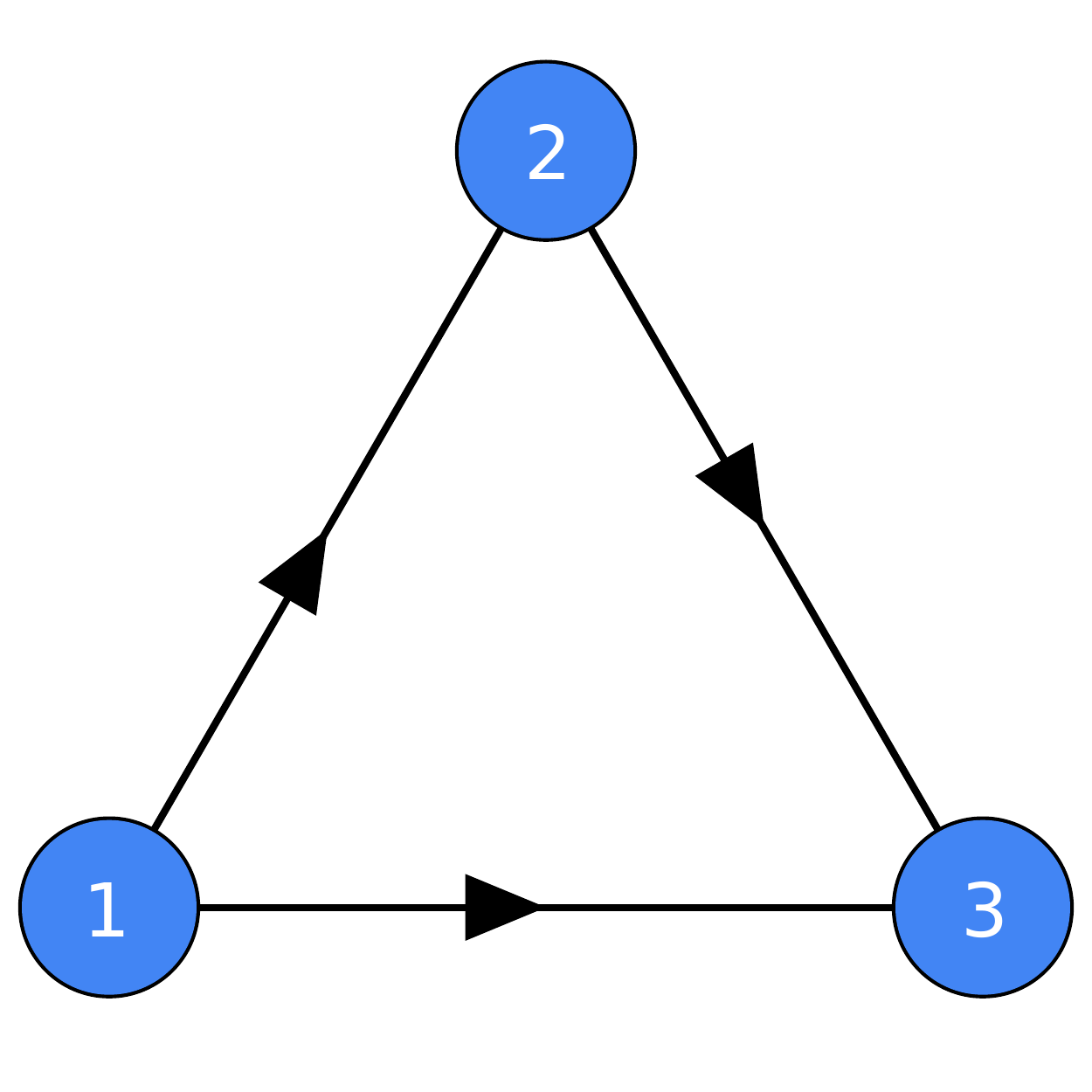}
          \put(45,7){$A$}
          \put(17,52){$B$}
          \put(74,52){$B$}
    \end{overpic}\hss}
  \caption{Triangle invariant}
  \label{fig:invtri}
\end{figure}

\begin{example}
  \label{ex:invdia}
  Figure~\ref{fig:invdia} shows a square graph. Here node 1 sends
  frame $A$ to node 2 and simultaneously sends frame $B$ to node
  4. Node 3 sends frame $C$ to node 2 and simultaneously sends frame
  $D$ to node 4.  Note that the transmissions of node 1 and node 3 are
  not synchronized with each other.  Then the quantity
  \[
  (\receive(A) - \receive(C)) - (\receive(B) - \receive(D))
  \]
  is invariant under clock relabelings.
\end{example}
\begin{figure}[htb]
  \hbox to \linewidth{\hss\begin{overpic}[width=30mm]{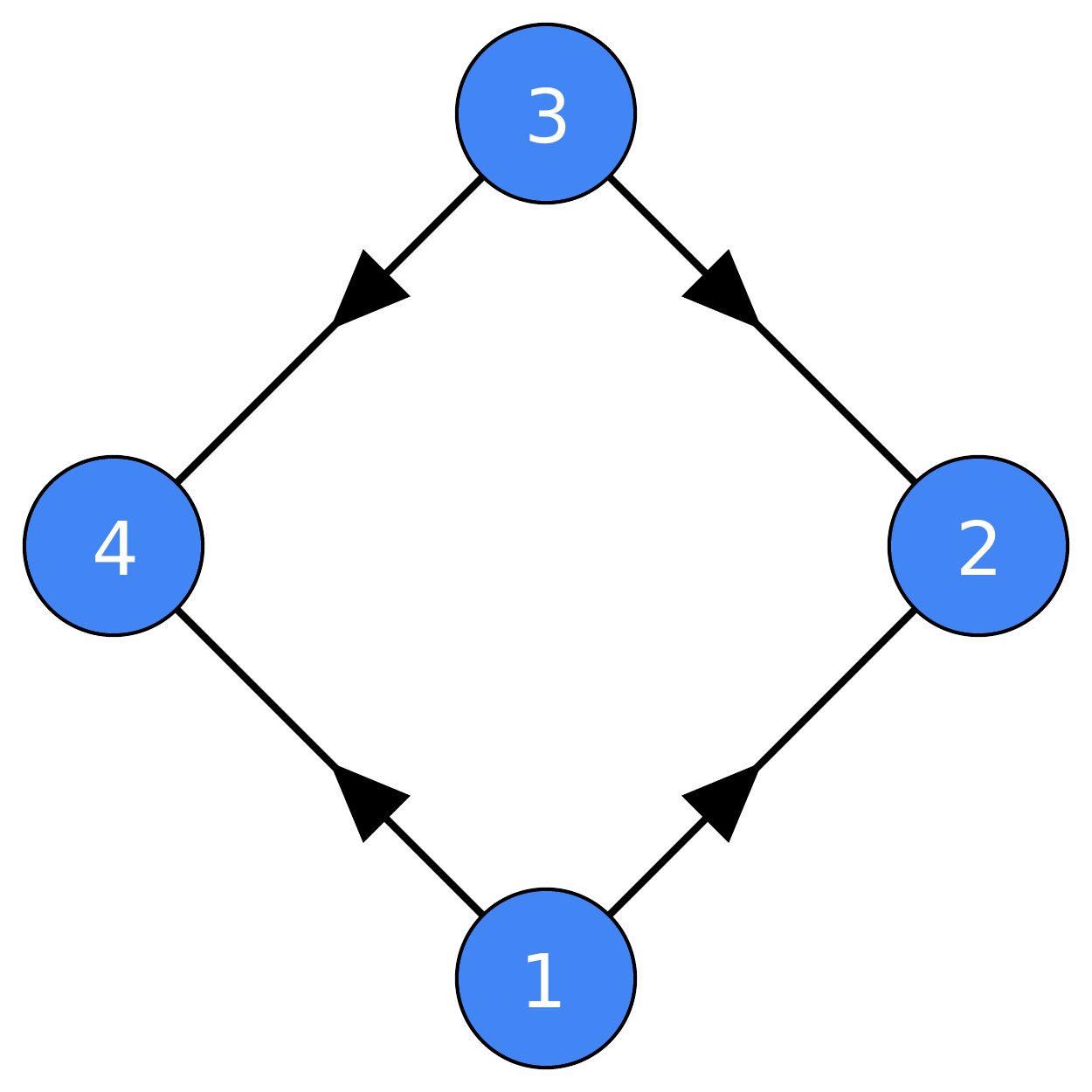}
          \put(70,20){$A$}
          \put(20,20){$B$}
          \put(72,72){$C$}
          \put(20,72){$D$}
    \end{overpic}\hss}
  \caption{Diamond invariant}
  \label{fig:invdia}
\end{figure}

Equivalent networks can have different logical latencies, but must
have the same round-trip times. The question of how much freedom this
leaves is interesting, and has an important consequence which we
discuss below. We first show that one can set the logical latencies
arbitrarily on any spanning tree.

\begin{thm}
  \label{thm:labtree}
  Suppose $\mathcal G, \ugn$  is  a logical synchrony network,
  where $\mathcal G= (\mathcal V, \mathcal E)$. Suppose $\tree
  \subset \mathcal E$ is a spanning tree. Then for any
  $\gamma:  \tree \to \Z$ there exists $c\in\Z^n$ such that
  \[
  \gamma_\itoj = \ugn_\itoj + c_j - c_i \text{ for all } i \to j \in \tree
  \]
\end{thm}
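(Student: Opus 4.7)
The plan is to reformulate the tree-edge conditions as a single linear system and solve it using the unimodularity of $B_{11}$ provided by Theorem~\ref{thm:smith}. First I would relabel the edges so that $\tree = \{1, \dots, n-1\}$ and the vertices so that the block partition of Theorem~\ref{thm:smith} applies. The incidence matrix of $\tree$ viewed as a subgraph is then
$$B_T = \bmat{B_{11} \\ -\one^\tp B_{11}},$$
and the $n-1$ conditions $\gamma_\itoj - \ugn_\itoj = c_j - c_i$ assemble into the single equation $B_T^\tp c = \mu$, where $\mu \in \Z^{n-1}$ has entries $\gamma_s - \ugn_s$ for $s \in \tree$.

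I would then pin down a particular solution by setting $c_n := 0$. Since $B_T^\tp c = B_{11}^\tp c_{1:n-1} - c_n B_{11}^\tp \one$, this reduces the system to $B_{11}^\tp c_{1:n-1} = \mu$. By Theorem~\ref{thm:smith}, $B_{11}$ is unimodular, so $B_{11}^{-\tp}$ has integer entries, and $c_{1:n-1} := B_{11}^{-\tp} \mu \in \Z^{n-1}$. Setting $c := (c_{1:n-1}, 0)$ then gives the required vector in $\Z^n$.

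The crux of the argument is integrality of $c$: without the unimodularity from Theorem~\ref{thm:smith}, the unique real solution to $B_{11}^\tp c_{1:n-1} = \mu$ would merely be rational in general, and the theorem would fail. Unimodularity is precisely what forces the solution to lie in $\Z^{n-1}$. A purely combinatorial alternative is available which also makes the main obstacle disappear: root $\tree$ at an arbitrary vertex $r$, set $c_r := 0$, and for every other $v$ define $c_v$ as the signed sum of $\gamma_s - \ugn_s$ along the unique undirected $r$-to-$v$ path in $\tree$, with the sign of each term chosen according to whether the edge is traversed with or against its orientation. Integrality is then manifest term by term, and one verifies the required identity edge by edge by considering, for each tree edge $i\to j$, whether $i$ or $j$ is the parent in the rooted tree.
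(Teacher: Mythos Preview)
Your proof is essentially identical to the paper's: restrict the equation $\lambda-\gamma = B^\tp c$ to the tree coordinates, set the last entry of $c$ to zero, and invert $B_{11}^\tp$ using its unimodularity from Theorem~\ref{thm:smith} to obtain the integer solution $c=(B_{11}^{-\tp}y_1,\,0)$. One small sign slip: with the paper's convention $(B^\tp c)_{i\to j}=c_i-c_j$, the assembled system is $B_T^\tp c=-\mu$ (equivalently $\lambda-\gamma$ on $\tree$), not $B_T^\tp c=\mu$, but this does not affect the argument.
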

\begin{proof}
  We would like to show that there exists $c\in\Z^{n}$ such that
  \[
  \bmat{I & 0 } (\lambda - \gamma) = \bmat{I & 0} B^\tp c
  \]
  Let $y_1$ be the left-hand side, then using Theorem~\ref{thm:smith},
  this is equivalent to
  \[
  y_1 = \bmat{B_{11}^\tp & -B_{11}\one} c
  \]
  and hence we may choose
  \[
  c = \bmat{B_{11}^{-\tp} y_1 \\ 0}
  \]
  which is integral since $B_{11}$ is unimodular.
\end{proof}

We can use this result in the following way. There is no requirement
within this framework that logical latencies be nonnegative. However, it
turns out that any logical synchrony network which has nonnegative
round-trip times is equivalent to one with nonnegative logical
latencies.  We state and prove this result below. This result will be
useful when we discuss multiclock networks in the subsequent section.

\begin{theorem}
  \label{thm:shortest}
  Suppose $\mathcal G, \ugn$ is a logical synchrony network with
  $\mathcal G$ strongly connected, and for every directed cycle
  $\mathcal C$ the round-trip logical latency $\lambda_\mathcal C$ is
  nonnegative.  Then there exists an equivalent LSN with edge weights
  $\hat \lambda$ which are nonnegative.
\end{theorem}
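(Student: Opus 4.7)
The plan is to construct the required offsets $c \in \Z^n$ as shortest-path potentials on $\mathcal G$. Rewriting the target inequality $\hat\lambda_{i\to j} = \lambda_{i\to j} + c_j - c_i \geq 0$ as $c_i - c_j \leq \lambda_{i\to j}$ for every edge $i\to j \in \edges$, one recognizes this as exactly the feasibility condition for $c$ to be a potential function for the single-source shortest-path problem on $\mathcal G$ with edge weights $\lambda$. Such a potential is well-known to exist if and only if the graph contains no directed cycle of strictly negative weight, which is precisely our hypothesis.

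Concretely, I would fix any base vertex $s \in \nodes$ and define $c_i$ to be the minimum, over all directed paths from $i$ to $s$ in $\mathcal G$, of the sum of $\lambda$ values along that path. Strong connectivity guarantees that at least one directed $i \to s$ path exists for every $i$, and the assumption that every directed cycle of $\mathcal G$ has nonnegative round-trip latency ensures that the infimum cannot be driven to $-\infty$ by looping around a negative cycle; moreover, replacing any walk that revisits a vertex by the induced simple subpath can only decrease (or preserve) its total weight, so the minimum is attained on the finite set of simple paths. Integrality of $c_i$ is inherited from integrality of $\lambda$, which in turn follows from the fact that localticks are integers and $\rho = \tau + \lambda_{i\to j}$ must lie in $\Z$.

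With $c$ in hand, the verification is immediate. For any edge $i\to j \in \edges$, concatenating that edge with a shortest directed path from $j$ to $s$ produces a directed path from $i$ to $s$ of total weight $\lambda_{i\to j} + c_j$, so minimality of $c_i$ yields $c_i \leq \lambda_{i\to j} + c_j$, i.e.\ $\hat\lambda_{i\to j} \geq 0$. The equivalence definition is then satisfied by this integer $c$, producing the desired nonnegative-weight LSN. The main obstacle I anticipate is handling the boundary case of directed cycles with round-trip latency exactly zero, since one cannot invoke a clean ``strictly decreasing cost per cycle traversal'' argument; this is resolved by the simple-path reduction noted above, which is where nonnegativity (not strict positivity) suffices.
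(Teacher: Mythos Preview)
Your proposal is correct and uses essentially the same idea as the paper's proof: take $c$ to be shortest-path distances, so that the Bellman inequality $c_i \leq \lambda_{i\to j} + c_j$ gives $\hat\lambda_{i\to j}\geq 0$ on every edge. The paper packages it slightly differently---using distances \emph{from} a root $r$, invoking a shortest-path spanning tree, and then applying Theorem~\ref{thm:labtree} to zero out the tree latencies---whereas you use distances \emph{to} a sink $s$ and verify the inequality directly, which is a bit more self-contained but the same argument at heart. One small imprecision worth tightening: concatenating the edge $i\to j$ with a shortest $j\to s$ path yields only a \emph{walk} when $i$ already lies on that path; your earlier cycle-removal remark covers this, since excising the nonnegative cycle through $i$ produces a genuine $i\to s$ path of no greater weight, but you should say ``walk'' there rather than ``path.''
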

\begin{proof}
  Pick a node $r$. Since the graph has no negative cycles, there
  exists a spanning tree $\tree$, rooted at $r$, with edges
  directed away from the root, each of whose paths is a shortest
  path~\cite{tarjan}. Use Theorem~\ref{thm:labtree} to construct $c$
  such that
  \[
  \ugn_\itoj + c_j - c_i = 0 \text{ for all } i \to j \in \tree
  \]
  As a result, we have $\ugn_\itoj = c_i - c_j$ for all edges $i \to
  j$ in the tree $\tree$. Denote by $t_{i\shortto k}$ the length of the
  path from $i$ to $k$ in the tree. Then we have $t_{i\shortto k} = c_i - c_k$.

  Since this is a shortest path tree, we have for any edge $i \to j$
  \[
  t_{r \shortto i} + \lambda_\itoj \geq t_{r \shortto j}
  \]
  because the path in the tree from $r$ to $j$ must be no longer than
  the path via node $i$. Therefore
  \[
  c_r - c_i + \lambda_\itoj \geq c_r - c_j
  \]
  Setting $\hat \lambda_\itoj = \lambda_\itoj +c_j - c_i$ for all edges
  we find $\hat \lambda_\itoj \geq 0$ as desired.
\end{proof}

This result says that, if we have a shortest path tree, we can relabel
the clocks so that the logical latency is zero on all edges of that
tree, and with that new labeling the logical latency will be
nonnegative on every tree edge. An example is given in
Figure~\ref{fig:lsnneg}.

\begin{figure}[htb]
  \hbox to \linewidth{\hss
    \includegraphics[width = 60mm]{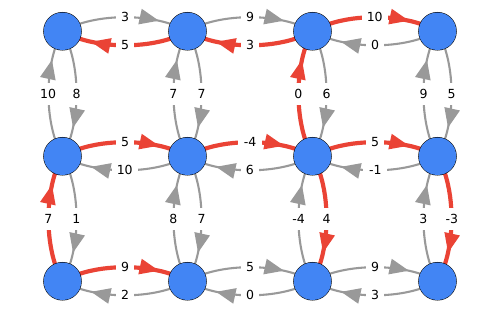}\hss}
  \par\bigskip
  \hbox to \linewidth{\hss
      \includegraphics[width = 60mm]{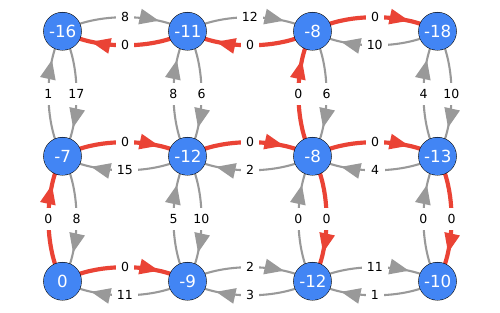}\hss}
  \caption{Relabeling so that logical latencies are nonnegative. The
    upper graph shows edges labeled with $\lambda$. The root node is
    in the lower left, and the
    shortest-path spanning tree is shown in red. The lower graph
    shows an equivalent LSN, with nodes $i$ labeled with $c_i$, and
    the corresponding logical latencies $\hat \lambda_\itoj =
    \lambda_\itoj +c_j - c_i$. All logical latencies in this graph are
    nonnegative. }
  \label{fig:lsnneg}
\end{figure}

Note also that an edge having zero logical latency does not imply that
communication between the endpoints is instantaneous; only that the
numerical value of the time at which the frame is received is equal to
the numerical value of the time at which it was sent.

\section{Multiclock networks}
\label{sec:mc}

The objective of this section is to build a model of a
physical system, and relate its correctness to the invariants of the
previous section.  We introduce a model in which there are physical
clocks at each node, and the nodes pass data to each other, according
to specific timed sequential communications which occur through
FIFOs. We call such a system a \emph{multiclock network}.  We show
that the latencies that arise satisfy exactly the semantics of the
abstract latencies of logical synchrony networks. We further show that
the natural requirements that the FIFO occupancies be bounded leads to
the physical requirement that round trip times are nonnegative.  In
other words, building a correct multiclock network will result in a
correct logical synchrony network.

We formulate the relationship between events on a
network in terms of physical clocks, leading to a mathematical
definition called the multiclock network. We show that
multiclock networks are special types of logical synchrony networks.

We will use $t$ to denote an idealized notion of time, called
\emph{wall-clock time}, or \emph{ideal time}~\cite{andre}.  Time on
the network is \emph{multiform}~\cite{berry}, in the sense that the
nodes on the network each maintain their own sense of time.  At each
node, there is a real-valued clock, denoted by $\theta_i$.  Its units are the
\emph{localticks}. We refer to the value $\theta_i$ as the \emph{local
time} or \emph{phase} at node $i$. Local time has no quantitative
relationship to physical or wall-clock time.  In particular, we do not
view $\theta_i$ as an approximation to wall-clock time and
consequently clocks at two distinct nodes are inherently unrelated.

At a node $i$, a processor can read the value $\theta_i$, its own
clock, but cannot access the value $\theta_j$ at any other node $j\neq
i$.  We mathematically model $\theta_i$ as a function of physical
time~$t$, so that $\theta_i:\R \to \R$, without implying anything
about its construction; it simply means that if at physical time $t$ a
hypothetical outside observer were to read clock $i$, it would read
value $\theta_i(t)$. What is required is that $\theta_i$ is continuous
and increasing, so that $\theta_i(s) < \theta_i(t)$ if $s < t$. We
emphasize again that this does not imply that any processes running on
the system can access wall-clock time $t$. The quantity $\theta_i$ is
not related to physical time.

At times $t$ where $\theta_i$ is differentiable, we define the
frequency~$\omega_i$ of the clock $\theta_i$ by
\[
\omega_i(t) = \frac{d \theta_i(t)}{dt}
\]
At a node $i$, a clock generates an infinite sequence of events, also
referred to as \emph{localticks}, which happen whenever $\theta_i$ is
an integer.  Clocks are not required to be periodic, and
this definition of frequency is applicable in the general aperiodic case.
Clocks at different nodes may have very different frequencies.
 If the frequency at node $i$ is large, then events at that
node occur more often.

We model the process of frame transmission from node $i$ to node $j$
as a FIFO, but real-world implementations are likely to consist of
uninterrupted physical communication streams feeding into memory
buffers.  Every node can access the output (or head) of the FIFO
corresponding to each of its incoming links, and the input (or tail)
of the FIFO corresponding to each of its outbound links. We will
discuss below the requirement that FIFOs neither overflow nor
underflow.

\paragraph{Logical synchrony in multiclock networks.}
With every localtick, node $i$ inserts a frame at the tail of each of
its outgoing link FIFOs and removes a frame from the head of each of
its incoming link FIFOs.  This lock-step alignment of input and output
is the fundamental synchronization mechanism that imposes logical
synchrony upon the network. At each node, with every localtick, one
frame is removed from each incoming FIFO and one frame is sent on each
outgoing FIFO.

\paragraph{Formal definition of multiclock network.}
We now turn to a mathematical model that will enable us to analyze the
behavior of this system.

\begin{defn}
  \label{defn:multiclock}
  A \eemph{multiclock network} is a directed graph $\mathcal G =
  (\nodes,\edges)$ together with continuous increasing functions
  ${\theta_i:\R \to \R}$ for each $i\in\nodes$, and edge weights
  $\ugn:\edges \to \Z$.
\end{defn}

This definition contains the entire evolution of the clock phases
$\theta_i$, and the link properties $\lambda_\itoj$. We will discuss
the physical meaning of $\lambda_\itoj$ below. Unlike the logical
synchrony network, where events are abstract and have no physical time
associated with them, in a multiclock network the global timing of all
events is defined by the clocks $\theta$. We will show that a
multiclock network is a special case of a logical synchrony network,
and the constants $\lambda$ are the associated logical latencies. To
do this, we  model the behavior of the FIFOs connecting the nodes.

\paragraph{FIFO model.} If $i \to j$ in the graph $\mathcal G$, then
there is a FIFO connecting node $i$ to node $j$.  With every localtick
at node $i$, a frame is added to this FIFO, and with every localtick
at node $j$, a frame is removed from the FIFO. We number the frames in
each FIFO by $k\in\Z$,  according to the localtick at the sender,
and the frames in the FIFO are those with $k$ satisfying
\[
\alpha_\itoj(t) \leq k \leq \beta_\itoj(t)
\]
where $\alpha$ and $\beta$ specify which frames are currently in the
FIFO at time $t$. The FIFO model is as follows.
\begin{align}
  \label{eqn:fifobeta}
  \beta_\itoj(t) & = \floor{\theta_i(t)}  \\
  \label{eqn:fifoalpha}
  \alpha_\itoj(t) & = \floor{\theta_j(t)} - \lambda_\itoj + 1
\end{align}
Equation~\eqref{eqn:fifobeta} means that frames are added with each
localtick at the sender, and numbered according to the sender's clock.
Equation~\eqref{eqn:fifoalpha} means that frames are removed with each
localtick at the receiver. The constant $\lambda$ is to account for
the offset between the frame numbers in the FIFO and the clock labels
at the receiver. (We add $1$ for convenience.) This offset must be
constant, since one frame is removed for each receiver localtick.
This constant is specified by the multiclock network model in
Definition~\ref{defn:multiclock}.

This model precisely specifies the location of every frame on the
network at all times $t$. In particular, this determines the FIFO
occupancy at startup. For any time $t_0$, the specification of
$\lambda$ is equivalent to specifying the occupancy of the FIFOs at
time $t_0$. This allows us to have a well-defined FIFO occupancy
without requiring an explicit model of startup.

\paragraph{Logical latency.}

Logical latency is the fundamental quantity which characterizes
the discrete behavior of a network, and allows us to ignore the
details of the clocks~$\theta_i$. The idea is that we can understand
the logical structure of the network, such as the events, the
execution model, and causality, without needing to know specific
wall-clock times at which these things occur.

We now show that the quantity $\lambda_\itoj$ corresponds to the
logical latency.  Suppose a frame is sent from node $i$ at localtick
$k\in\Z$, and wall-clock time $\tsend^k$. Then $\theta_i(\tsend^k) = k$.
Let the time which it is received at node $j$ be denoted by $\trec^k$.
Both $\tsend^k$ and $\trec^k$ are wall-clock times, and apart from the
causality constraint that the frame must be received after it is sent,
there is no constraint on the difference between these times; that is,
the \emph{physical latency} $\trec^k - \tsend^k$ may be large or small.
In general, physical latency will be affected by both the number of
frames in the FIFO $i \to j$ as well as the time required for a frame
to be physically transmitted. We do not presuppose requirements on the
physical latency.

\begin{lem}
  Suppose frame $k$ is sent from node $i$ to node~$j$. Then
  $\tsend^k$ and $\trec^k$ satisfy
  \begin{align}
    \label{eqn:tsend}
    \theta_i(\tsend^k) & = k \\
    \label{eqn:trec}
    \theta_j(\trec^k) &= k + \lambda_\itoj
  \end{align}
  and hence the logical
  latency is given by
  \begin{equation}
    \label{eqn:loglat}
    \begin{aligned}
    \ugn_\itoj &= \theta_j(\trec^k) - \theta_i(\tsend^k)
    \end{aligned}
  \end{equation}
\end{lem}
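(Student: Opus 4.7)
The plan is to unfold the FIFO model in Definition~\ref{defn:multiclock} and equations~\eqref{eqn:fifobeta}--\eqref{eqn:fifoalpha} and read off each equation directly. Both \eqref{eqn:tsend} and \eqref{eqn:trec} are essentially consequences of how frames are indexed at the sender and removed at the receiver, and \eqref{eqn:loglat} then follows by subtraction. Since each localtick occurs precisely when $\theta_i$ attains an integer value, the continuity and strict monotonicity of $\theta_i$ let me freely translate between statements involving $\theta_i$ and statements involving $\floor{\theta_i}$ at these event times.

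For \eqref{eqn:tsend}, I would argue that frame $k$ is, by the indexing convention of the FIFO model, the frame inserted at the $k$th localtick of node $i$. Equivalently, $\beta_\itoj(t) = \floor{\theta_i(t)}$ jumps from $k-1$ to $k$ at exactly the instant $t$ when $\theta_i(t)$ crosses the integer $k$, and this is by definition $\tsend^k$. Because $\theta_i$ is continuous and strictly increasing, there is a unique such instant, and at it $\theta_i(\tsend^k) = k$.

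For \eqref{eqn:trec}, I would apply the analogous reasoning on the receive side. Frame $k$ leaves the FIFO at node $j$ at the first localtick of node $j$ for which $\alpha_\itoj(t) > k$. Using \eqref{eqn:fifoalpha}, the condition $\alpha_\itoj(\trec^k) = k+1$ becomes $\floor{\theta_j(\trec^k)} - \ugn_\itoj + 1 = k+1$, so $\floor{\theta_j(\trec^k)} = k + \ugn_\itoj$. Since $\trec^k$ is itself a localtick of node $j$, $\theta_j(\trec^k)$ is an integer, and the floor is redundant, yielding $\theta_j(\trec^k) = k + \ugn_\itoj$.

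Equation~\eqref{eqn:loglat} is then immediate: subtracting \eqref{eqn:tsend} from \eqref{eqn:trec} gives $\theta_j(\trec^k) - \theta_i(\tsend^k) = \ugn_\itoj$. The only subtlety worth flagging is the bookkeeping in \eqref{eqn:fifoalpha} with the ``$+1$'' offset and the convention of identifying a localtick with the instant $\theta_i$ reaches an integer rather than the instant just after; once that convention is pinned down, the proof is essentially a single line of unfolding. I do not expect any real obstacle here, since the lemma is primarily serving to justify that the edge weight $\ugn_\itoj$ in the multiclock model really is the logical latency in the sense of the earlier sections.
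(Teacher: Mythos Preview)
Your proposal is correct and follows essentially the same approach as the paper: characterize $\tsend^k$ as the first time $\beta_\itoj(t)=k$, characterize $\trec^k$ as the first time $\alpha_\itoj(t)=k+1$, plug in the FIFO model equations~\eqref{eqn:fifobeta}--\eqref{eqn:fifoalpha}, and subtract. The paper phrases the first step via $\tsend^k = \inf\{t \mid \beta_\itoj(t)=k\}$ and appeals to right-continuity of the floor, whereas you invoke continuity and strict monotonicity of $\theta_i$ to drop the floor at a localtick; these are equivalent justifications of the same step.
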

\begin{proof}
  Since frames in the FIFO $i \to j$  are numbered according
  to the sender's clock, we have
  \[
  \tsend^k = \inf\{ t \mid \beta_\itoj(t) = k \}
  \]
  that is, $\tsend^k$ is the earliest time at which frame $k$ is in the
  FIFO from $i$ to $j$. Since the floor function is right continuous,
  this gives equation~\eqref{eqn:tsend}.
  Similarly, we have
  \[
  \trec^k = \inf \{ t \mid \alpha_\itoj(t) = k + 1\}
  \]
  which is the first time $t$ at which the lowest-numbered frame
  in the FIFO is number $k+1$, and therefore this is the time at which
  frame $k$ has just left the FIFO, and hence has just arrived at the
  destination.
  This implies equation~\eqref{eqn:trec}, and the logical latency follows.
\end{proof}
Unlike the physical latency $\trec - \tsend$, the logical latency
$\theta_j(\trec^k) - \theta_i(\tsend^k)$ does not change over
time. Note also that the logical latency is an integer.  Since the
logical latency is constant, we can conclude that every multiclock
network is a logical synchrony network; more precisely, the logical latencies
defined by the multiclock network satisfy the same properties as those
of a logical synchrony network.

\subsection{Realizability}

We now turn to an analysis of the occupancy of the {FIFO}s in more
detail.  A frame is considered \emph{in-transit} from $i \to j$ at
time $t$ if it has been sent by node $i$ but not yet received by node
$j$; that is, if it is in the FIFO from $i$ to $j$.  Define
$\nu_\itoj(t)$ to be the number of frames in transit $i \to j$.
Then we have
\begin{align}
  \nonumber
  \nu_\itoj(t)
  & = \beta_\itoj(t) - \alpha_\itoj(t) + 1 \\
  \label{eqn:ugn}
  &= \floor{\theta_i(t)} - \floor{\theta_j(t)} + \ugn_\itoj
\end{align}
and this holds for all $t$. Here we can see that
the constant $\lambda_\itoj$ is a
property of the link $i \to j$, which determines the relationship
between the clock phases at each end of the link and the number of
frames in transit.

So far in this model, there is nothing that prevents the FIFO
occupancy on an edge $i\to j$ from becoming negative.  If the clock at
node $\theta_j$ has a higher frequency than the clock at $\theta_i$,
and if that frequency difference is maintained for long enough, then
the FIFO $i \to j$ will be rapidly emptied.  In this case, $\theta_j$
will become much larger than $\theta_i$, and from~\eqref{eqn:ugn} we
have that $\nu_\itoj$ will become negative. Similarly, the FIFO will
overflow if the frequencies become imbalanced in the other direction.
In \cite{reframing} a technique using a dynamically switching control
algorithm is presented that allows prevention of such behaviors.  We
make the following definition.

\begin{defn}
  \label{defn:realizable}
  A multiclock network is called \eemph{realizable} if
  there exists $\nu_\text{max} \in \R$ such that
  for all edges $i \to j$
 \begin{equation}
   \label{eqn:deviation1}
   0 \leq  \nu_\itoj(t)  \leq \nu_\text{max} \quad \text{for all }t \in \R
 \end{equation}

\end{defn}
Note that this requirement must hold for all positive and negative time $t$.
The terminology here is chosen to be suggestive, in that we would like
a condition which implies that we can physically implement a
multiclock network. A physically necessary condition is that the FIFO
occupancies are bounded and cannot be negative.

\paragraph{Cycles and conservation of frames.}
Cycles within a multiclock network have several important
properties. The first is \emph{conservation of frames}, as follows.
\begin{thm}
  \label{thm:cycles}
  Suppose $\mathcal C =  v_0,s_0,v_1,s_1,\dots,s_{k-1},v_k$
 is a directed cycle in a multiclock network. Then
  \[
  \sum_{i=0}^{k-1} \nu_{s_i}(t) =
  \lambda_{\mathcal C}
  \]
  In particular, the number of frames in transit around the cycle is
  constant, and is the sum of the logical latencies on the cycle.
\end{thm}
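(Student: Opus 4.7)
The plan is to apply the per-edge frame-count formula from equation~\eqref{eqn:ugn} to each edge of the directed cycle and then observe that the clock-phase terms telescope around the cycle.

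First, I would write the cycle as a sequence of directed edges $s_0, s_1, \dots, s_{k-1}$ where each $s_i$ goes from $v_i$ to $v_{i+1}$, with $v_k = v_0$ because the cycle is closed. For each such edge $s_i = v_i \shortto v_{i+1}$, equation~\eqref{eqn:ugn} gives
\[
\nu_{s_i}(t) = \floor{\theta_{v_i}(t)} - \floor{\theta_{v_{i+1}}(t)} + \ugn_{s_i}.
\]
Summing over $i = 0, \dots, k-1$, the $\floor{\theta_{v_i}(t)}$ terms form a telescoping sum. Because $v_0 = v_k$, every clock phase appears once with a $+$ sign and once with a $-$ sign, so the telescoping contribution vanishes identically in $t$. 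What remains is $\sum_{i=0}^{k-1} \ugn_{s_i}$, which by definition equals $\lambda_{\mathcal C}$.

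This also establishes the ``in particular'' claim: the sum is a constant (independent of $t$), so the number of in-transit frames around any directed cycle is conserved over time, and equals the cycle's total logical latency.

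There is no real obstacle here; the only subtlety worth noting is that the formula~\eqref{eqn:ugn} was derived for all $t$ (not just at localticks), so the telescoping identity is a pointwise statement and makes sense even though individual $\nu_{s_i}(t)$ values jump as frames are sent and received. The cancellation of the floor terms is what forces those jumps, taken around the cycle, to always net to zero.
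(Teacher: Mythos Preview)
Your proposal is correct and matches the paper's approach exactly: the paper's proof simply says the result follows immediately from equation~\eqref{eqn:ugn}, and you have spelled out the telescoping argument that makes this immediate.
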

\begin{proof}
  The proof follows immediately from~\eqref{eqn:ugn}.
\end{proof}

An immediate corollary of this is that, in a physical network, if
every edge of $\mathcal{G}$ is on a cycle, then the number of frames
in the network is finite and the upper bound condition for
realizability is satisfied. This is the case, for example, in a
strongly connected graph. Note that this holds because, in a
physical network, the FIFO occupancy cannot be negative. It is not the
case that the FIFO model used here implies that $\nu$ is upper
bounded, since in the model some FIFO lengths may become large and
negative while others become large and positive.

This theorem is particularly evocative in the simple and common case
where we have two nodes $i$, $j$ connected by links in both
directions. In this case, whenever $i$ receives a frame, it removes it
from it's incoming FIFO from $j$, and adds a new frame to the outgoing
FIFO to~$j$.  Thus the sum of the occupancies of the two FIFOs is
constant.

The following result relates round trip times to realizability.

\begin{thm}
  Suppose $\mathcal{C}$ is a cycle in a realizable multiclock
  network. Then $\lambda_\mathcal{C} \geq  0$.
\end{thm}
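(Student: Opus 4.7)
The plan is to combine the conservation-of-frames identity from Theorem~\ref{thm:cycles} with the lower-bound half of realizability. I would start by writing $\mathcal{C} = v_0, s_0, v_1, s_1, \dots, s_{k-1}, v_k$ and applying Theorem~\ref{thm:cycles}, which gives
\[
\lambda_\mathcal{C} = \sum_{i=0}^{k-1} \nu_{s_i}(t)
\]
valid for every $t \in \R$. The left-hand side is a constant in $t$, while the right-hand side is a sum of FIFO occupancies; the identity is therefore a pointwise equality between a constant and a time-varying function, and can be evaluated at any time of our choosing.

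Next I would invoke Definition~\ref{defn:realizable}: realizability requires $\nu_{s_i}(t) \ge 0$ for every edge and every $t$. Each summand on the right-hand side of the displayed identity is then nonnegative, and a finite sum of nonnegative reals is nonnegative. Hence $\lambda_\mathcal{C} \ge 0$, as claimed.

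There is no real obstacle here; the result is a one-line corollary of the two previously established facts once one reads ``cycle'' in the statement as a directed cycle. This reading is forced by the hypothesis of Theorem~\ref{thm:cycles}, and the claim is genuinely false on non-directed cycles in general (two antiparallel edges with different latencies traversed in opposite senses provide an immediate counterexample). Conceptually, the only thing happening in this proof is that realizability is exactly the hypothesis needed to convert the exact frame-conservation equality into a sign constraint on round-trip logical latencies.
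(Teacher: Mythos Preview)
Your proof is correct and matches the paper's own argument exactly: the paper simply says the result follows immediately from Theorem~\ref{thm:cycles} and Definition~\ref{defn:realizable}. Your remark that ``cycle'' must be read as ``directed cycle'' for Theorem~\ref{thm:cycles} to apply is also apt.
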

\begin{proof}
  This follows immediately from Theorem~\ref{thm:cycles} and
  Definition~\ref{defn:realizable}.
\end{proof}

That is, a realizable multiclock network has the important physical
property that all round-trip times are nonnegative.  The
monotonic property of $\theta$ implies that this holds in both
localticks and wall-clock time. No matter what path a frame takes
around the network, it cannot arrive back at its starting point before
it was sent.  However, it is possible, within the class of realizable
networks defined so far, for this sum to be equal to zero. In this
case one would have a frame arrive at the time it is sent.  This would
require some pathological conditions on the clocks.
This is an extreme case corresponding to the limit where
frames spend zero time in the FIFOs, which in a physical network would
require that the link have zero link latency.  For example, in the
case of a length 2 cycle between nodes $i$ and $j$, we would need
$\theta_j(t) = \theta_i(t) + \lambda_\itoj$ and $\theta_i(t)
= \theta_j(t) + \lambda_{\jtoi}$, which would give $\lambda_\itoj
= \lambda_\jtoi$. Since the clocks are related by integer constants,
they tick at exactly the same times.

\subsection{Equivalent synchronous systems}
\label{subsec:ess}
We now consider the class of perfectly synchronous systems, where all
of the nodes of the graph share a single clock. The links between the
nodes are FIFOs as before, and as a result of the synchronous
assumption their occupancies are constant.  This is a particular
instance of the multiclock network where all clocks $\theta_i$ are
equal.

Such a system has an extended graph, and it has logical latencies
which do not change with time, and are equal to the occupancies of the
FIFOs, according to~\eqref{eqn:ugn}. Because the system is
synchronous, the FIFOs behave like a chain of delay buffers.
The corresponding execution model, defined by the extended graph,
is identical to that of a logical synchrony network with the same
logical latencies. Said another way, a logical synchrony network
is equivalent to a perfectly synchronous network of processors connected by
delay buffers with occupancies given by the logical latencies.

This suggests the following question; what happens if we have a logical
synchrony network where one or more of the edges has a negative
logical latency?  Using Theorem~\ref{thm:shortest}, we know that if a
network has nonnegative round-trip times, one can relabel the clocks
so that all logical latencies are nonnegative. Hence any physically
constructible multiclock network is equivalent to a perfectly
synchronous network.

\section{The bittide mechanism}

We now turn to a specific form of multiclock network
which can be implemented on modern networking hardware.
In
Section~\ref{sec:mc} we have already discussed one of the key
components of this, specifically that with each localtick, a node
removes one frame from the head of every incoming FIFO, and sends one
frame on every outgoing FIFO.  However, this is not enough for
implementation, since we must ensure that the occupancies of the FIFOs
neither underflow nor overflow.

In the \bittide model, the FIFO connecting node $i$ to node $j$ is
composed of two parts, connected sequentially.  The first part is a
communication link, which has a latency $l_\itoj$, the number of
\emph{wall-clock} seconds it takes to send a frame across the
link. The second part is called the \emph{elastic buffer}. It is a
FIFO which is located at the destination node $j$. Node $i$ sends
frames, via the communication link to node $j$, where they are
inserted at the tail end of the elastic buffer. We assume that the
communication link cannot reorder frames, and so together the
communication link and the elastic buffer behave as a single FIFO.

Each node has an elastic buffer for each of its incoming links.  With
each clock localtick, it does two things; first, it removes a frame from the
head of each of the elastic buffers and passes that frame to the
processor core; second, the core sends one frame on each outgoing
communication link.

The purpose of this structure is as follows.  An implementation
of \bittide has nodes whose hardware oscillators are
adjustable. The elastic buffer occupancies provide
information regarding the relative clock frequencies of the node
compared to its incoming neighbors.  This allows the oscillators to be
adjusted in real-time, by each node, based on measurements of the
occupancy of the elastic buffers.  Off-the-shelf modules are available
which provide fine-grained control of the oscillator frequency.
Specifically, if we have an edge $i \to j$, and node $i$
has a lower clock frequency that node $j$, then the corresponding
elastic buffer at node $j$ will start to drain. Conversely, if node
$i$ has a higher clock frequency, the elastic buffer will start to
fill. Node $j$ can therefore use the occupancy of the elastic buffers
to adjust its own clock frequency. If, on average, it's buffers are
falling below half-full, the node can reduce its clock frequency, and
conversely.

\begin{figure}[htb]
  \hbox to \linewidth{\hss\begin{overpic}[width=35mm]{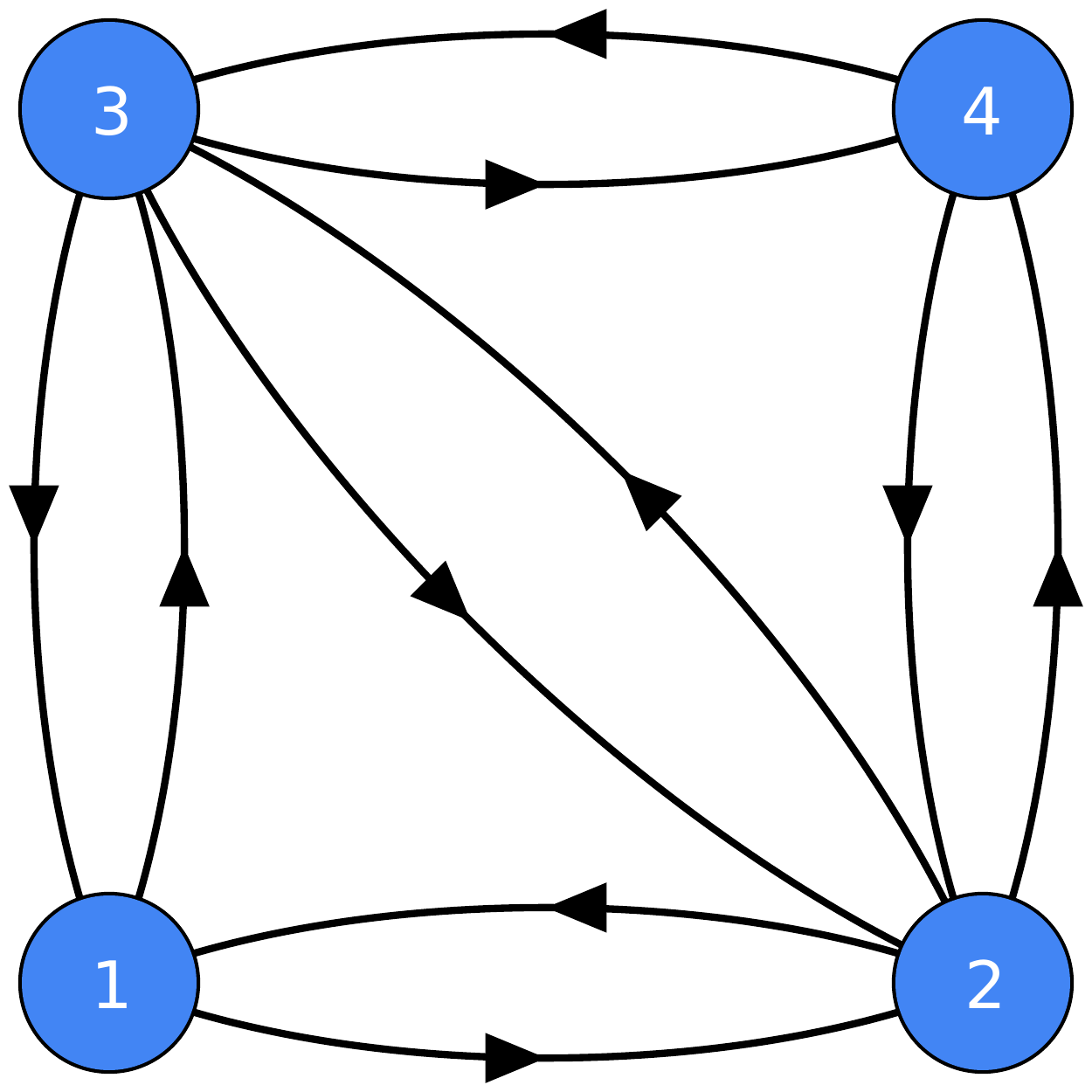}
    \end{overpic}\hss}
  \caption{Graph for \bittide simulation}
  \label{fig:btgraph}
\end{figure}

This mechanism was originally proposed in~\cite{spalink_2006}. Further
refinements to the implementation were developed
in~\cite{bms,res,reframing}.  These papers show that, provided the
frequency corrections are chosen appropriately, this mechanism will
ensure that elastic buffers never underflow or overflow.
A simple mechanism for doing this is to control
the \emph{correction}. Adjustable oscillators allow
choosing a value for correction $c$, which causes the frequency
$\omega$ to become
\[
\omega = (1 + \alpha c) \wu
\]      
Here $\wu$ is the base frequency of the oscillator, which is only
known approximately, and $\alpha$ is small, of the order of $10^{-6}$.
Let $\beta_{\itoj}$ be the occupancy of the elastic buffer at node $j$
for the link from $i$ to $j$.  Each node $j$ polls the hardware to
observe these quantities, and sets the correction at node $j$ to be
\[
c = k_p \sum_{i \mid i \to j} (\beta_{\itoj} - \beta_0)
\]
where $k_p$ is a positive constant, and the sum is over all links
which are incoming to $j$. The value $\beta_0$ is a fixed offset.  For
an appropriate choice of $k_p$, all of the the frequencies converge to
the same steady-state value.  See~\cite{bms,res,reframing} for more
details.

An example simulation of the clock dynamics is
in Figure~\ref{fig:btgraph}. The time
evolution of the clock frequency $\omega$ and the buffer occupancy
$\beta$ is shown in Figure~\ref{fig:btsim}, with the buffer occupancy
for edge $i\to j$ labeled $i,j$. For this simulation, the link
latencies $l_\itoj = 1\text{ns}$. Note that in this simulation the
parameters are chosen so that
the dynamics of the system are clearly visible.
In particular, the nodes start
at frequencies $1.1, 1.4, 1.8, 2.0$ GHz, and in practical hardware
systems typically the frequencies at startup would be separated by
less than one part in $10^5$.  Similarly, the control algorithm
parameters are set so that convergence is slow and the equilibrium
buffer occupancies are large, between 25 and 75 frames, whereas
in practice (\eg, in the hardware of~\cite{qbay}) these parameters
are chosen to keep the buffers much smaller. With more realistic parameters
the dynamics follow the same general pattern, but are less visible on a plot. 

\paragraph{Available implementations.}
There are three open-source efforts addressing \bittide and logically
synchronous systems. The first is the hardware description, written in
Clash, available at~\cite{qbay}. This may be compiled onto
standard FPGA boards, linked to controlled oscillator boards.  Second,
there is a simulator called \emph{Callisto}~\cite{callisto}, which is
written in Julia, and simulates the dynamics of the oscillators and
the occupancies of the elastic buffers.  Finally, there is the
\emph{Aegir} simulator~\cite{aegir}, written in Rust, which is a
functional simulation of a logical synchrony network.

\begin{figure}[t]
  \centerline{\begin{overpic}[width=1\linewidth]{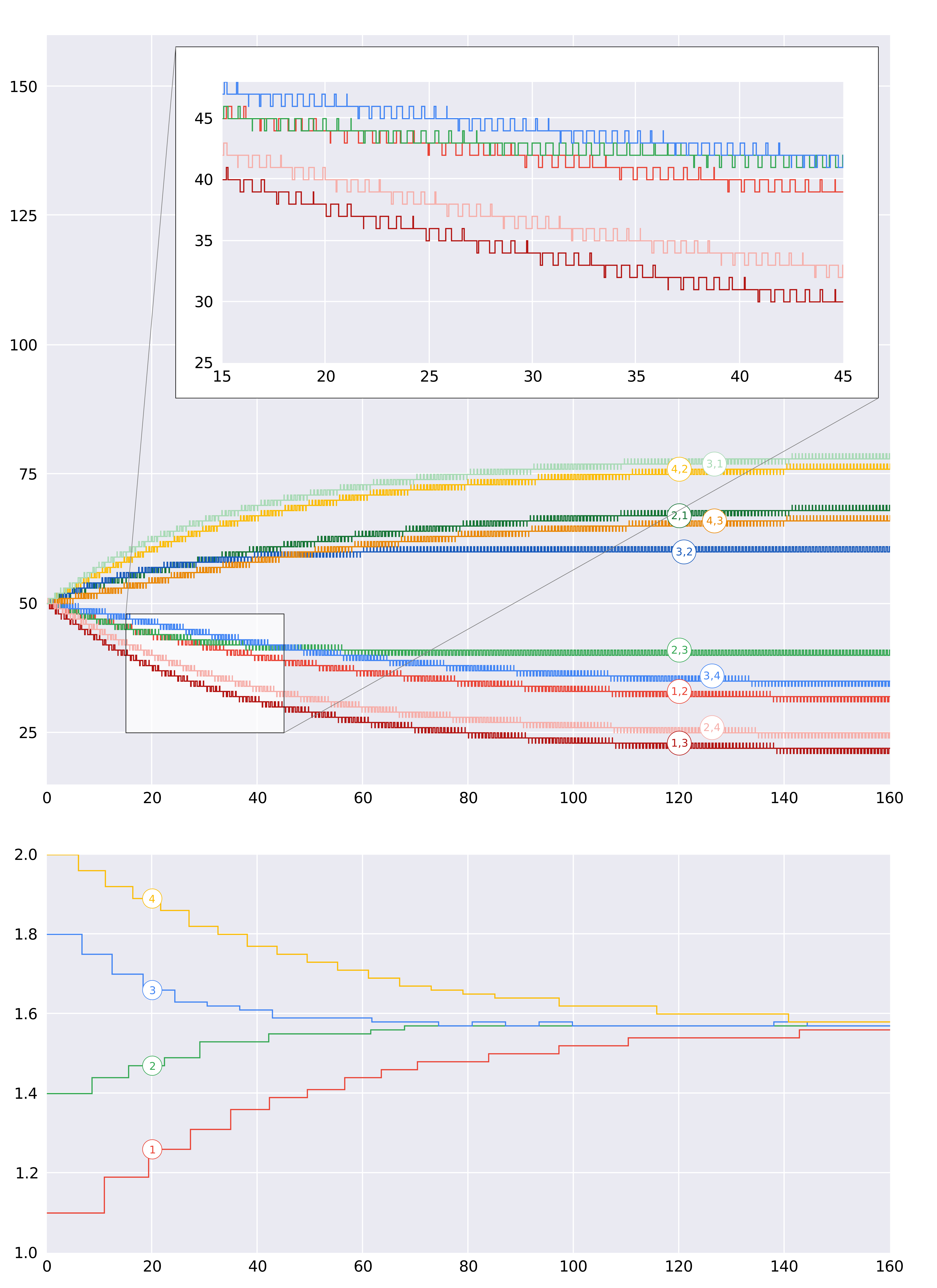}
      \put(3,17.5){\llap{\scalebox{0.6}{$\omega(GHz)$}}}
      \put(0,68){\llap{\small$\beta$}}
      \put(38,-1){\clap{\small$t (ns)$}}
  \end{overpic}}
  \vspace*{3mm}
  \caption{Occupancy and frequency of the \bittide system.}
  \label{fig:btsim}
\end{figure}

\section{Related work}

The seminal work of Lamport~\cite{lamport} presents a formal framework
for clocks in distributed systems, which in particular defined an
ordering on a directed graph corresponding to temporal relationships
between events, and a global scalar clock which was consistent with
that ordering.  Subsequent work~\cite{mattern,fidge} developed the
notion of vector clocks, where each node in a network maintains a
vector notion of time which captures exactly the ordering defined by
the graph. The synchronization mechanism of \bittide was first
proposed in~\cite{spalink_2006}. Subsequent works include~\cite{bms},
which developed a mathematical model of the synchronization layer,
and~\cite{res}, which analyzed its performance properties.

Ever since the first distributed systems, synchronous execution has
been a gold standard for formal reasoning, provable
correctness properties, and ability to express efficient
algorithms~\cite{lamport_reliable_distr_systems_1978,
  dwork_consensus_1988, liskov_practical_1991,le2003polychrony}.
As a consequence, the
domain of synchronous execution has been studied extensively, in
particular in the context of cyber-physical systems. Cyber-physical
systems interact with physical processes, and
Lee~\cite{lee:time:cacm:2009} argues that integrating the notion of
time in system architecture, programming languages and software
components leads to the development of predictable and repeatable
systems.

Reasoning about distributed systems has led to the definition of both
execution models and parallel programming models. Kahn Process
Networks~\cite{kahn1974} is one of the most general; while it does not
involve time or synchronization explicitly, processes in a Kahn
process network communicate through blocking FIFOs, and thus
synchronize implicitly through the communication queues. An important
distinction between \bittide and the Kahn Process Networks is that the
former does not make use of blocking.

Synchrony, and its most common representation as a global time
reference, led to the definition of multiple models of
computation. For example, Synchronous Dataflow~\cite{lee:sdf:1987}
enables static scheduling of tasks to resources; Timed Concurrent
Sequential Processes (Timed CSP)~\cite{reed1988metric} develop a model
of real-time execution in concurrent systems; Globally Asynchronous,
Locally Synchronous (GALS) communication
models~\cite{potop-butucaru_concurrency_2004} address the issue of
mapping a synchronous specification to existing systems which are
asynchronous.

Henzinger et al.~\cite{henzinger_embedded_2001} introduce the concept
of \emph{logical execution} and Kopetz et
al.~\cite{kopetz_time-triggered_2003} introduce Time-Triggered
Architectures (TTAs) as a system architecture where time is a
first-order quantity and they take advantage of the global time
reference to exploit some of the desirable properties of synchronous
execution: precisely defined interfaces, simpler communication and
agreement protocols, and timeliness guarantees.

Synchronous programming models led to synchronous programming languages, e.g.,
Esterel~\cite{berry:esterel:1998}, Lustre~\cite{halbwachs:lustre:1991},
Signal~\cite{benveniste:signal:1991}, and the development of tools to formally
analyze their execution correctness as well as compilers to generate correct
synchronizing code for embedded~\cite{benveniste:12yearsofsynchrony:2003} or
multicore platforms~\cite{didier_sheep_2019}. This created a virtuous cycle --
as researchers understood better properties and embedded them into languages
and tools, they drove the adoption of synchronous execution and formal tools
for a number of industrial control applications, avionics, and critical system
components.

\section{Conclusions}

This paper has presented the logical synchrony framework. We have
shown how this may be used to enable processes on a network of
distributed machines to coordinate as if they were synchronized, even
if the the clocks on the individual cores are only imperfectly
synchronized. We have discussed the \bittide mechanism for
implementing logical synchrony, how it is abstracted as a multiclock network,
and how that corresponds to a further abstraction called the logical
synchrony network (LSN). We have analyzed the invariant properties
of these networks, and shown how these clocks provide predictable
logical latencies on the network.

\section{Acknowledgments}

The ideas for this paper came about through much collaboration. In
particular, we would like to thank Nathan Allen, Pouya Dormiani, Chase
Hensel, Logan Kenwright, Robert O'Callahan, Chris Pearce, Dumitru
Potop-Butucaru, and Partha Roop for many stimulating discussions about
this work. Robert had the idea for the proof of
Theorem~\ref{thm:shortest}.

\end{document}